\newtheorem{lem}{Lemma}
\newtheorem{prop}{Proposition}
\newtheorem{thm}{Theorem}
\DeclareMathOperator{\Tr}{Tr}
\newcommand{\p}{\psi}
\newcommand{\la}{\lambda} 
\newcommand{\e}{\mathbb{E}}
\newcommand{\R}{\mathbb{R}}
\begin{document}
\title[Spectrum]{%
Spectrum of SYK model III:\\ Large deviations and concentration of measures}

\author[Feng, Tian and Wei]{Renjie Feng, Gang Tian, Dongyi Wei}
\address{Beijing International Center for Mathematical Research, Peking University, Beijing, China, 100871.}

\email{renjie@math.pku.edu.cn}
\email{gtian@math.pku.edu.cn}
\email{jnwdyi@pku.edu.cn}

\date{\today}
   \maketitle
    \begin{abstract}
 In \cite{FTD1}, we proved the almost sure convergence of eigenvalues of the SYK model, which can be viewed as a type of \emph{law of large numbers} in probability theory; in  \cite{FTD2}, we proved that the linear statistic of eigenvalues satisfies the \emph{central limit theorem}. In this article, we continue to study another important theorem in probability theory\,-- the \emph{concentration of measure theorem}, especially for the Gaussian SYK model.  
 We will  prove a \emph{large deviation principle} (LDP)  for the normalized empirical measure of eigenvalues when $q_n=2$,  
 in which case the eigenvalues can be 
 expressed in term of these of Gaussian random antisymmetric matrices. Such LDP result has its own independent interest in  random matrix theory. For general $q_n\geq 3$,  we can not prove the LDP, we will prove a concentration of measure theorem   by estimating the Lipschitz norm of the  Gaussian SYK model.
\end{abstract}

\section{Introduction}

 In this article, we will study the large deviation principle and the concentration of measure theorem for the Gaussian SYK model, instead of the general SYK model considered in \cite{FTD1, FTD2}. 
 
 The Gaussian SYK model  is \cite{black, wenbo susy, K, MS, SY} \begin{equation}\label{sykm}H=i^{[q_n/2]}\frac{1}{\sqrt{{{n} \choose {q_n}}}}\sum_{1\leq i_1<i_2< \cdots < i_{q_n} \leq n} J_{i_1i_2\cdots i_{q_n}}{\p}_{i_1}{\p}_{i_2}\cdots {\p}_{ i_{q_n}},\end{equation}
where $n$ is an even integer, $J_{i_1i_2\cdots  i_{q_n}} $ are independent identically distributed
(i.i.d.) standard real Gaussian random variables with mean 0 and variance 1; 
$\p_j$ are Majorana fermions  satisfying the algebra
\begin{equation}\label{anti}\left\{\p_i,\p_j\right\}:=\p_i\p_j+\p_j\p_i=2\delta_{ij},\,\,\,1\leq i, j\leq n  .\end{equation}

By the representation of the Clifford algebra, $\p_i$ can be represented by $L_n\times L_n$ Hermitian matrices with $L_n=2^{n/2}$. Actually $\{\psi_i\}_{1\leq i\leq n}$ can be generated by Pauli matrices  iteratively \cite{lo}.  
Let  $\la_{i}, 1\leq i \leq L_n$ be the eigenvalues of $H$. One may check that $H$ is Hermitian by the anticommunitative relation \eqref{anti}, thus $\la_{i}$ are real numbers.  One of the main tasks in random matrix theory is to understand 
the  following normalized empirical measure of eigenvalues of $H$  	\begin{equation}\label{emp}
  \rho_n(\lambda):=\frac 1{L_n}\sum_{i} \delta_{\la_{i}}(\la).
	\end{equation}
Let's first summarize the main results in \cite{FTD1, FTD2}. Other than the standard Gaussian random variables, in  \cite{FTD1}, we consider the general cases where $J_{i_1i_2\cdots i_{q_n}}$ are i.i.d. random variables with mean 0 and variance 1, and the $k$-th moment  of $|J_{i_1i_2\cdots  i_{q_n}}| $ is uniformly bounded for any fixed $k$. 
We proved that $\rho_n$  converges to a probability measure $\rho_\infty$ almost surely (or with probability 1) in the sense of distribution, and the limiting density $\rho_\infty$ depends on the limit of the quotient ${q_n^2}/n$.  To be more precise, let $2\leq q_n\leq n/2$ be even, then $\rho_\infty$ will be the standard Gaussian measure if ${q_n^2}/n\to 0$; $\rho_\infty$ is the semicircle law if ${q_n^2}/n\to \infty$; and $\rho_\infty$ is related to the $q$-Hermite polynomial theory if ${q_n^2}/n\to a$. The results can be extended to even $q_n\geq n/2$ immediately.   One can also derive the results for $q_n$ odd.  The main result in \cite{FTD2} is that the linear statistic of eigenvalues satisfies the central limit theorem, which indicates the information about the 2-point correlation of the eigenvalues.   Regarding the spectral properties of the SYK model, we also refer to the numerical results in \cite{GV, GV2, GV3, GV4}. 


In this article, we continue to study the spectrum of the Gaussian SYK model. We will prove a large deviation principle (LDP) for eigenvalues when $q_n=2$ and  a concentration of measure theorem for general $q_n\geq 3$.

Throughout the article, we always assume $n$ is an even integer, $J_{i_1\cdots i_{q_n}}$ are standard Gaussian random variables and $q_n^2/n$ has a limit.   In physics, people care especially when $q_n$ is an  even integer, but  the model is still a good one in mathematics if $q_n$ is odd. Our main results apply to both cases. Moreover,  we only state and prove the main results for $0<q_n\leq n/2$, the results can be extended to $q_n\geq n/2$   immediately. This is because, as explained in \cite{FTD1}, there is a symmetry between the systems  with interaction of $q_n$ fermions   and $n-q_n$ fermions.



\subsection{Large deviations}\label{ldppi}
When $q_n=2$,   the SYK model reads
\begin{equation}\label{22}H=\frac{i}{\sqrt{{{n} \choose {2}}}}\sum_{1\leq i_1<i_2\leq n} J_{i_1i_2}{\p}_{i_1}{\p}_{i_2}.\end{equation}
 Let \begin{equation}\label{syk2}J=(J_{ij})_{1\leq i,j\leq n},\ J_{ji}:=-J_{ij} \end{equation} be the real Gaussian antisymmetric matrices.  This system is totally solvable in physics. If the eigenvalues of $J$ are $\pm i\mu_j$ where  $\mu_j\geq 0$ for $ 1\leq j\leq n/2$, then all eigenvalues of $H$ are given explicitly as  \cite{black, FTD1, GV3, MS}
\begin{equation}\label{dsdd}{{n} \choose {2}}^{-\frac{1}{2}}\sum\limits_{j=1}^{n/2} \pm \mu_{j}.\end{equation}
The normalized empirical measure defined in \eqref{emp} reads
\begin{equation}\label{2222}
	\rho_n:=\frac 1{L_n}\sum_{a_1,\cdots,a_{n/2}\in \{\pm 1\}} \delta_ {{{n}\choose {2}}^{-\frac{1}{2}}\sum\limits_{j=1}^{n/2} a_j \mu_{j}}.
\end{equation}
Then $\rho_n$ will tend to the standard Gaussian measure almost surely \cite{FTD1} and the linear statistic of these eigenvalues satisfies the central limit theorem \cite{FTD2}. 
In this article, we will further study its large deviation principle. We refer to  \cite{AGZ}  for the definition and basic properties of the LDP, and several well-known LDP results regarding the eigenvalues of random matrices. 

To state our result, we need to introduce an  auxiliary space. 
Let $X$ be a subspace of $l^{\infty},$
\begin{equation}\label{x}X=\left\{(x_j)_{j=0}^{\infty}\in l^\infty|x_j\geq x_{j+1}\geq 0,\ \forall\ j\in\mathbb{Z},j>0,\ x_0\geq \sum_{j=1}^{+\infty}x_j^2\right\},\end{equation} where \begin{equation}l^{\infty}=\{(x_j)_{j=0}^{\infty}|x_j\in\mathbb{R};\ \sup\limits_{j\geq 0}|x_j|<+\infty\},\end{equation} with the metric \begin{equation}d(x,y)=\sup\limits_{j\geq 0}|x_j-y_j|,\end{equation}
 for $x=(x_j)_{j=0}^{\infty},\ y=(y_j)_{j=0}^{\infty}.$ Then $(l^{\infty},d) $ is a complete metric space. By Fatou's Lemma, we know that $X$ is a closed subspace of $l^{\infty},$ thus $(X,d) $ is also a complete metric space (Polish space).
 
For $n$ even,  let us define $\gamma_n\in X$  as $(\gamma_n)_j={n\choose 2}^{-\frac{1}{2}}\mu_j $  for $1\leq j\leq n/2,$ $(\gamma_n)_0={n\choose 2}^{-1}\sum\limits_{j=1}^{n/2}\mu_j^2 $ and $(\gamma_n)_j=0 $  for $ j> n/2, $ i.e., 
 \begin{equation}\label{ln}\gamma_n=\left ({n\choose 2}^{-1}\sum\limits_{j=1}^{n/2}\mu_j^2, {n\choose 2}^{-\frac{1}{2}}\mu_1,\cdots, {n\choose 2}^{-\frac{1}{2}}\mu_{n/2},\, 0,\,\cdots \right).\end{equation}
For $x=(x_j)_{j=0}^{\infty}\in X, $ let \begin{equation}\label{jfunction}J(x):=x_0- \sum_{j=1}^{+\infty}x_j^2\end{equation} and \begin{equation}X_0=\{x\in X|J(x)=0\},\end{equation} then we have \begin{equation}J(x)\geq 0\,\,\mbox{and}\,\,\,\gamma_n\in X_0.\end{equation}
  We first have the LDP of $(\gamma_{n})_{n>0,n\in 2\mathbb{Z}}$  in this auxiliary space, 
   \begin{prop} \label{ldpp}
Let $\pm i\mu_j$  be eigenvalues of Gaussian antisymmetric matrices $J $ as in \eqref{syk2}.
Then the  random measure $(\gamma_{n})_{n>0,n\in 2\mathbb{Z}}$ defined in \eqref{ln} satisfies the LDP in $(X,d)$ with speed $n^2/4$ and good rate function \begin{equation}\label{i}I(x)=\begin{cases}x_0-1-\ln J(x), \,\,\, x\notin X_0; \\ +\infty, \,\,\,\,\,\,\,\,\quad\quad\quad\quad \,\, x\in X_0.\end{cases}\end{equation}
  \end{prop}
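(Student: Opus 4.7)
My plan is to exploit the rotation invariance of the Gaussian antisymmetric ensemble to decouple scale and shape. Let $N=\binom{n}{2}$, $\sigma^2:=(\gamma_n)_0=N^{-1}\sum_{i<j}J_{ij}^2$, and $V:=J/\sigma$; one checks $\|V\|_{HS}^2=2N$, and by the $O(n)$-invariance of the law of $J$ together with the fact that $\sigma$ is a radial quantity, $V$ is uniformly distributed on the antisymmetric sphere $\{A:\|A\|_{HS}^2=2N\}$ and is independent of $\sigma$. Writing $\nu_j=\mu_j/\sigma$ for the singular values of $V$ and $\eta_j=\nu_j/\sqrt{N}$, one has $\sum_j\eta_j^2=1$ and
\begin{equation*}
\gamma_n=(\sigma^2,\,\sigma\eta_1,\,\sigma\eta_2,\,\ldots,\,\sigma\eta_{n/2},\,0,\,\ldots).
\end{equation*}
Thus $\gamma_n$ is the image of the independent pair $(\sigma^2,(\eta_j))$ under the continuous map $F:(y,\zeta)\mapsto(y,\sqrt{y}\zeta_1,\sqrt{y}\zeta_2,\ldots)$, reducing the proposition to LDPs for the two factors plus the contraction principle.

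The LDP for $\sigma^2=N^{-1}\chi^2_N$ comes from Cram\'er's theorem together with Stirling: at speed $N/2\sim n^2/4$ the rate function is $y-1-\log y$ on $(0,\infty)$. For the shape $(\eta_j)$ I use the Weyl-type density
\begin{equation*}
p(\nu_1,\ldots,\nu_{n/2})\propto\prod_{j<k}(\nu_j^2-\nu_k^2)^2\,\delta\Bigl(\textstyle\sum_j\nu_j^2-N\Bigr)
\end{equation*}
on the antisymmetric sphere. For each fixed $K$, I compute the marginal of $(\eta_1,\ldots,\eta_K)$ by substituting $\nu_j=\sqrt{N(1-\sum_{i\le K}\eta_i^2)}\,\zeta_j$ for $j>K$ onto the unit sphere. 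The Jacobian together with the rescaling of the Vandermonde factors $\prod_{K<j<k}((1-\sum\eta_i^2)(\zeta_j^2-\zeta_k^2))^2$ produces a prefactor $(1-\sum_{i\le K}\eta_i^2)^{(n/2-K)(n/2-K-1)+O(n)}$, whose logarithm normalized by $n^2/4$ converges to $-\log(1-\sum_j\eta_j^2)$; the residual $\zeta$-integral matches the reduced-sphere partition function up to $O(n\log n)$ corrections, which are subexponential at this speed. Adding the two rate functions,
\begin{equation*}
(x_0-1-\log x_0)+\bigl(-\log(1-\textstyle\sum_j x_j^2/x_0)\bigr)=x_0-1-\log J(x)=I(x),
\end{equation*}
which is \eqref{i}; when $x\in X_0$ the shape rate is $+\infty$ since $\sum\eta_j^2=1$ forces $\sum x_j^2/x_0=1$.

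To promote the finite-dimensional LDPs to the Polish space $(X,d)$, I apply the Dawson-G\"artner projective limit theorem to the coordinate projections $\pi_K(x)=(x_0,\ldots,x_K)$ and verify exponential tightness in the sup metric. The defining constraints $x_1\ge x_2\ge\cdots\ge 0$ and $x_0\ge\sum_j x_j^2$ give $jx_j^2\le x_0$, so a Cram\'er tail bound on $\sigma^2$ uniformly controls the tail coordinates $(\gamma_n)_j$, producing compact level sets in $(X,d)$. The main technical obstacle is the saddle-point analysis of the shape density: the Vandermonde factor and the spherical volume each contribute $\Theta(n^2\log n)$ to the log-density, and the clean rate $-\log(1-\sum\eta_j^2)$ emerges only after their delicate cancellation with the change-of-variables Jacobian; a related difficulty is showing that the cross-product $\prod_{j\le K<k}(\eta_j^2-(1-\sum_{i\le K}\eta_i^2)\zeta_k^2)^2$ generated by the substitution contributes only $o(n^2)$ in the log-asymptotics.
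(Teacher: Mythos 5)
Your radial--angular decomposition of the Gaussian antisymmetric ensemble is a genuinely different route from the paper's, and the outline is conceptually sound: $\sigma^2$ and the shape $(\eta_j)$ are indeed independent (the density is radial, so direction is sphere-uniform, and the shape is a measurable function of direction); the Cram\'er rate $y-1-\log y$ for $\sigma^2=N^{-1}\chi^2_N$ at speed $N/2\sim n^2/4$ is correct; the algebra combining the two rates reproduces \eqref{i}; and the exponential-tightness argument via $jx_j^2\le x_0$ matches what the paper does. The paper instead works with the raw eigenvalue density directly: it estimates the small-ball probabilities $\mathbb{P}(d(\gamma_n,x)<\epsilon)$ by splitting off the first $k$ ordered eigenvalues, localizing $\sum\mu_j^2$ via Lemma~\ref{lemma6}, and deriving Lemma~\ref{lemma7} (the lower bound $\liminf\frac{4}{n^2}\ln\int_{\Sigma_{n,a,b,\delta}}\cdots\geq 1-a+\ln a$), which plays exactly the role of your shape-plus-scale asymptotic. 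Your framing buys a cleaner conceptual picture and a factorization that makes the minimizer (Gaussian limit at $x_0=1$, $\eta\equiv 0$) transparent; the paper's buys a shorter path to a complete argument, since it never has to make sense of an LDP for the sphere-uniform shape in its own right.

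The gap is exactly where you flag it: the shape LDP is only sketched, and it carries essentially all the technical weight of the theorem. Concretely, (i) you have not shown that the cross-product $\prod_{j\le K<k}(\eta_j^2-r^2\zeta_k^2)^2$ and the residual $\zeta$-integral together contribute only $e^{o(n^2)}$; this requires a uniform-in-$K$ comparison between the reduced Weyl integral on the sphere in $n/2-K$ coordinates and the full one, which is not a priori a normalization identity and needs a Selberg-type estimate. (ii) Even granted the finite-dimensional shape LDP, passing to the infinite sequence via Dawson--G\"artner requires identifying the projective-limit rate function with $-\log(1-\sum_{j\ge 1}\eta_j^2)$; this works (the finite-dimensional rates increase to it, and goodness can be checked as in \S\ref{cpm}), but must be stated and proved, and it is here that the delicate topology of $X$ -- the fact that $X_0$ is not closed under $d$ -- enters. (iii) You invoke a joint LDP for the independent pair $(\sigma^2,(\eta_j))$ before contracting by $F$; this does hold when both marginals have good rate functions, but it is a nontrivial step that should be cited or proved. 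None of these is likely to fail, but as written the proposal establishes the framework and the answer, not the proposition.
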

We define $\ln 0=-\infty$, then $I$ is lower semicontinuous by Fatou's lemma. As $J(x)=x_0- \sum\limits_{j=1}^{+\infty}x_j^2\leq x_0,$ we have $I(x)=x_0-1-\ln J(x)\geq J(x)-1-\ln J(x)\geq 0$. If the equality holds, we must have $x_0=J(x)=1$ and $\sum\limits_{j=1}^{+\infty}x_j^2=0$, i.e., $x_j=0$ for $j>0$; actually this is the only point where $I(x)$ achieves its minimum, i.e., \begin{equation}\label{min}I(x_{min})=0, \,\, \,\,x_{min}=(1,0,\cdots).\end{equation}

 Let  $M_1(\mathbb{R}) $ be the set of Borel probability
measures on $\mathbb{R}$ equipped  with the bounded Lipschitz metric \begin{equation}\label{md}d_{BL}(\mu,\nu)=\sup|\langle  \mu, f \rangle-\langle  \nu, f \rangle|,\end{equation}
 where the supremum is subject to all 1-Lipschitz functions $f : \mathbb{R}\to \mathbb{R}$, i.e., $$\ |f(x)-f(y)|\leq |x-y|\,\,\,\,\mbox{and}\,\,\,\,|f(x)|\leq 1.$$ Then $(M_1(\mathbb{R}), d_{BL})$ is a Polish space \cite{AGZ}. 

The LDP of the normalized empirical measure \eqref{2222} in $(M_1(\mathbb{R}), d_{BL})$ will be induced by the LDP of $(\gamma_{n})_{n>0,n\in 2\mathbb{Z}}$ in $(X, d)$, where we need to construct a continuous and injective function \begin{equation} \varphi:X\to M_1(\mathbb{R})\end{equation} such that $ \varphi(\gamma_n)=\rho_n.$ By \eqref{2222}, the Fourier transform of $\rho_n $ is
\begin{equation}\label{rhot}\widehat{\rho_n}(s)=\langle  \rho_n(\lambda), e^{is\lambda} \rangle =\prod_{j=1}^{n/2}\cos {{n} \choose {2}}^{-\frac{1}{2}}s\mu_j.\end{equation} If we define the Fourier transform of the measure $\varphi$ as \begin{equation}\label{definf} \widehat{\varphi(x)}(s)=e^{-J(x)s^2/2}\prod_{j=1}^{+\infty}\cos sx_j,\,\,\,\,x=(x_j)_{j=0}^{\infty}\in X,\end{equation}
then by definition of  $\gamma_n\in X_0$, we must have \begin{equation}\varphi(\gamma_n)=\rho_n.\end{equation} 
In \S \ref{ssss}, we will further show that $ \varphi$ is a Borel probability measure, continuous and injective. Hence, by the Contraction Principle (cf. Theorem D.7 in \cite{AGZ}), we have  
\begin{thm} \label{ldp2}
 The  normalized empirical measure $\rho_n$ \eqref{2222} of eigenvalues of the Gaussian SYK model for $q_n=2$ satisfies the LDP in $(M_1(\mathbb{R}),d_{BL})$ with speed $n^2/4$ and good rate function $\widetilde{I}$ such that $\widetilde{I}(x)=I(\varphi^{-1} x)$ if $x\in\varphi(X)$ and $\widetilde{I}(x)=+\infty$ if $x\not\in\varphi(X),$ where $I(x)$ is defined by \eqref{i}.
  \end{thm}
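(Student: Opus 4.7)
The plan is to deduce Theorem~\ref{ldp2} from Proposition~\ref{ldpp} by the Contraction Principle (Theorem~D.7 of \cite{AGZ}), applied to the map $\varphi\colon X\to M_1(\R)$ whose characteristic function is prescribed by \eqref{definf}. Since $\varphi(\gamma_n)=\rho_n$ by construction and $I$ is already a good rate function, once $\varphi$ is shown to be well-defined into $M_1(\R)$, continuous from $(X,d)$ to $(M_1(\R),d_{BL})$, and injective, the Contraction Principle immediately yields the LDP for $\rho_n$ with good rate function $\widetilde{I}(y)=I(\varphi^{-1}(y))$ on $\varphi(X)$ and $+\infty$ elsewhere.

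For well-definedness, I would interpret \eqref{definf} as the characteristic function of the independent sum $G+\sum_{j\geq 1}\epsilon_j x_j$, with $G\sim\mathcal{N}(0,J(x))$ and $(\epsilon_j)$ independent Rademacher variables. Since $\sum_{j\geq 1} x_j^2\leq x_0<\infty$ for $x\in X$, Kolmogorov's three-series theorem yields almost sure convergence of the series, so $\varphi(x)$ is a bona fide Borel probability measure on $\R$ whose characteristic function matches \eqref{definf}.

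For continuity, L\'evy's continuity theorem together with the equivalence of $d_{BL}$-convergence and weak convergence reduces the task to pointwise convergence of characteristic functions whenever $x^{(k)}\to x$ in $d$. The subtle point is that $J$ is only lower semicontinuous on $X$: Fatou gives $\sum x_j^2\leq\liminf\sum(x_j^{(k)})^2$, and one easily engineers sequences with $J(x^{(k)})\not\to J(x)$. I would sidestep this by regrouping
$$\widehat{\varphi(x)}(s)=e^{-x_0 s^2/2}\prod_{j\geq 1}\bigl[e^{s^2 x_j^2/2}\cos(sx_j)\bigr],$$
in which $x_0$ is $d$-continuous and each factor obeys $|1-e^{s^2 x_j^2/2}\cos(sx_j)|\leq C(s)\,x_j^4\leq C(s)\,x_0\,x_j^2$. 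Consequently $\sum_j|1-(\cdot)|$ is controlled by $x_0^2$ uniformly on $\{x_0\leq R\}$, and splitting the product into a finite initial block (continuous in $x$) plus a tail (small by the preceding bound) delivers the required pointwise limit.

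For injectivity, assuming $\varphi(x)=\varphi(y)$, taking logarithms of characteristic functions near $s=0$ and using $\ln\cos u=-u^2/2-u^4/12-u^6/45-\cdots$ reduces the identity to matching coefficients of $s^{2k}$: the $k=1$ coefficient combined with $J+\sum x_j^2=x_0$ gives $x_0=y_0$, and for each $k\geq 2$ one obtains $\sum_j x_j^{2k}=\sum_j y_j^{2k}$. Since $x_j,y_j\geq 0$ are bounded and decreasing, $\bigl(\sum_j x_j^{2k}\bigr)^{1/2k}\to x_1$ as $k\to\infty$, forcing $x_1=y_1$; the multiplicity of the maximum is then recovered as $\lim_{k\to\infty}\sum_j x_j^{2k}/x_1^{2k}$, and peeling off the largest entries inductively yields $x=y$. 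The main obstacle I anticipate is the continuity step, precisely because of the lower-semicontinuity of $J$; the auxiliary coordinate $x_0$ in the definition of $X$ is exactly what rescues the argument via the regrouping above, and this is the only genuinely non-formal step in the plan.
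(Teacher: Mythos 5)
Your plan is correct and follows the same architecture as the paper: build $\varphi$ from \eqref{definf}, verify it is a probability-measure-valued, continuous, injective map, and invoke the Contraction Principle on top of Proposition~\ref{ldpp}. The well-definedness step (law of $G+\sum_j\epsilon_j x_j$) and the continuity step (the regrouping $e^{-x_0 s^2/2}\prod_j e^{s^2x_j^2/2}\cos(sx_j)$ followed by a tail estimate in $\sum_j x_j^2\leq x_0$, then L\'evy continuity / Lemma~\ref{lem2}) are essentially identical to the paper's; the paper bounds $|(sx_j)^2/2+\ln\cos(sx_j)|\leq\delta(sx_j)^2$ where you bound $|1-e^{(sx_j)^2/2}\cos(sx_j)|\leq C(s)x_j^4$, but both control the tail by $x_0$ in the same way and both identify the coordinate $x_0$ as the device that rescues continuity despite $J$ being merely lower semicontinuous. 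Where you genuinely diverge is injectivity: the paper recovers $x_1,x_2,\dots$ from the zeros of $\widehat{\varphi(x)}$, setting $f_0=\widehat{\varphi(x)}$ and iteratively $x_{k+1}=\pi/\bigl(2\inf\{t>0:f_k(t)=0\}\bigr)$, $f_{k+1}=f_k/\cos(\cdot\,x_{k+1})$, while you recover the $x_j$ from the Taylor coefficients of $\log\widehat{\varphi(x)}$ at $s=0$, i.e.\ from the power sums $\sum_j x_j^{2k}$, via $(\sum_j x_j^{2k})^{1/2k}\to x_1$ and a peeling-off induction. The paper's zero-based argument is pointwise and avoids any moment/analyticity bookkeeping (it only needs continuity of the Fourier transform), whereas your moment-based argument is arguably more robust and self-contained but requires justifying that $\log\widehat{\varphi(x)}$ is real-analytic near $0$ with term-by-term Taylor coefficients from $\sum_j\ln\cos(sx_j)$ (which does hold here since $\sum_j x_j^2<\infty$ gives local uniform convergence). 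Both establish injectivity correctly; the remainder of the proof is then the same appeal to Lemma~\ref{lem3}.
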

As a remark,  by \eqref{min}, one can conclude that $\widetilde{I}$ will achieve  its minimum at $\varphi(x_{min})$ where $x_{min}=(1,0,\cdots)$. By definition \eqref{definf}, the Fourier transform $\widehat{\varphi(x_{min})}(s)$ is  the  Gaussian function,  and thus  $\varphi(x_{min})$ is the Gaussian distribution, which implies that  $\widetilde{I}$  achieves  its minimum at the Gaussian distribution. 
 

 \subsection{Concentration of measure theorem}
 We can not derive the LDP for general $q_n\geq 3$, but we can prove a weaker version which is the concentration of measure theorem. The proof is based on the following classical Gaussian concentration of measure theorem \cite{Led}: 
 Let $(a_k)_{1\leq k\leq N}$ be $N$-dimensional Gaussian random vectors, and let $F:\R^{N}\to\R $ be Lipschitz with Lipschitz constant $L$, then there are universal constants $C,c>0$ such that for $t>0,$
\begin{equation}\label{P}\mathbb{P}[|F(a_1,\cdots, a_N)-\e F(a_1,\cdots, a_N)|>t]\leq Ce^{-ct^2/L^2}.
\end{equation}
We denote the set   $$I_n=\{(i_1, i_2,\cdots, i_{q_n}),\,\,1\leq i_1<i_2< \cdots <i_{q_n} \leq n \}.$$
 For any coordinate $R=(i_1, \cdots, i_{q_n})\in I_n$, we denote $$J_R:=J_{i_1\cdots i_{q_n}}\,\,\,\mbox{and}\,\,\,\,\Psi_R:=\psi_{i_1}\cdots \psi_{i_{q_n}}. $$  
Then we can simply rewrite \begin{equation}\label{simple}H=\frac{i^{[q_n/2]}}{\sqrt{{{n} \choose {q_n}}}}\sum_{R\in I_n} J_{R}\Psi_R.\end{equation}
 If we consider 
$H$ as a function of the standard Gaussian random vectors $(J_R)_{R\in I_n}$, then we first have the following Lipschitz estimates.
\begin{lem}\label{lemlip}Let $x:=(J_R)_{R\in I_n}\in \R^{{n\choose q_n}}$ be the Gaussian random vector and  $\rho_n$ be the normalized empirical measure \eqref{emp}. We consider the SYK model $H:=H(x)$ as a function of $x$. 
\\ (a) Let $f:\R\to\R$ be Lipschitz, then the map $x\mapsto\langle f,   \rho_n\rangle$ is ${n\choose q_n}^{-1/2} \|f'\|_{L^\infty(\mathbb R)}$-Lipschitz; \\(b) For any probability measure $ \rho$ on $\R$, the map $x\mapsto d_{BL} ( \rho_n,\rho)$ is ${n\choose q_n}^{-1/2} $-Lipschitz. 
\end{lem}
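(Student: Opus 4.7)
The plan is to derive (a) from the Hoffman--Wielandt inequality, reducing the problem to computing the Hilbert--Schmidt distance $\|H(x)-H(y)\|_{HS}$, which should collapse to a simple multiple of the Euclidean norm $\|x-y\|$ via the orthogonality of the Clifford monomials $\Psi_R$ under the trace pairing. Part (b) will then follow from part (a) by a one-line triangle inequality argument.

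For part (a), I would write $\langle f,\rho_n\rangle=\frac{1}{L_n}\Tr f(H)$ and order the eigenvalues $\lambda_1(x)\leq\cdots\leq\lambda_{L_n}(x)$ of the Hermitian matrix $H(x)$. Hoffman--Wielandt gives
\begin{equation*}
\sum_{i=1}^{L_n}(\lambda_i(x)-\lambda_i(y))^2\leq \|H(x)-H(y)\|_{HS}^2,
\end{equation*}
and combining with Cauchy--Schwarz and the Lipschitz property of $f$ yields
\begin{equation*}
|\langle f,\rho_n(x)\rangle-\langle f,\rho_n(y)\rangle|\leq \frac{\|f'\|_\infty}{L_n}\sum_i|\lambda_i(x)-\lambda_i(y)|\leq \frac{\|f'\|_\infty}{\sqrt{L_n}}\|H(x)-H(y)\|_{HS}.
\end{equation*}

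The remaining task is to identify $\|H(x)-H(y)\|_{HS}^2=\Tr[(H(x)-H(y))^2]$. Expanding by \eqref{simple}, the off-diagonal terms $\Tr(\Psi_R\Psi_{R'})$ with $R\neq R'$ reduce (after anticommuting) to the trace of a nonempty product of distinct $\p_i$'s, which vanishes; whereas $\Psi_R^2=(-1)^{q_n(q_n-1)/2}I$, so $\Tr(\Psi_R^2)=(-1)^{q_n(q_n-1)/2}L_n$. The prefactor $i^{[q_n/2]}$, when squared, contributes $(-1)^{[q_n/2]}$, and a short case check on the parity of $q_n$ shows that $(-1)^{[q_n/2]}(-1)^{q_n(q_n-1)/2}=1$. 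Therefore only the diagonal $R=R'$ terms survive with a positive sign, and
\begin{equation*}
\|H(x)-H(y)\|_{HS}^2=\frac{L_n}{\binom{n}{q_n}}\sum_{R\in I_n}(x_R-y_R)^2=\frac{L_n}{\binom{n}{q_n}}\|x-y\|^2.
\end{equation*}
Substituting into the previous inequality gives the desired Lipschitz constant $\binom{n}{q_n}^{-1/2}\|f'\|_\infty$.

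For part (b), the test functions $f$ appearing in \eqref{md} are $1$-Lipschitz, so applying part (a) with $L=1$, together with the triangle inequality, yields
\begin{equation*}
|\langle\rho_n(x),f\rangle-\langle\rho,f\rangle|\leq \binom{n}{q_n}^{-1/2}\|x-y\|+|\langle\rho_n(y),f\rangle-\langle\rho,f\rangle|
\end{equation*}
uniformly in admissible $f$; taking the supremum and then symmetrizing in $x,y$ gives the claim. The only genuinely delicate point in the whole argument is the sign bookkeeping that makes $\|H\|_{HS}^2$ manifestly equal to $\frac{L_n}{\binom{n}{q_n}}\|x\|^2$ with no spurious $\pm$; everything else is routine.
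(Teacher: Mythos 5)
Your proof is correct and follows essentially the same route as the paper's: reduce to the Hilbert--Schmidt distance via Hoffman--Wielandt and Cauchy--Schwarz, compute $\|H(x)-H(y)\|_{HS}^2$ using orthogonality of the Clifford monomials under the trace, and deduce (b) from (a) by a triangle-inequality and supremum argument. The only cosmetic difference is that you write the square of $\Psi_R$ as $(-1)^{q_n(q_n-1)/2}I$ while the paper writes $(-1)^{[q_n/2]}I$; these agree modulo $2$, and in either form the sign cancels the $(-1)^{[q_n/2]}$ coming from $(i^{[q_n/2]})^2$, as you correctly note.
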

 Once we have the above Lipschitz estimates, by \eqref{P} of the classical concentration of measure theorem for Gaussian random vectors, we can prove
 
\begin{thm}\label{cmt}
Let $\rho_n$ be the normalized empirical measure of the Gaussian SYK model for any $0<q_n\leq n/2$ as in \eqref{emp}  and $\rho_\infty$ be the   limiting measure according to the limit $q^2_n/n$ as we derived in \cite{FTD1}. Given $a>0$, then there exists $C(a)>0$ such that $$\mathbb{P}(d_{BL}(\rho_n,\rho_{\infty})>a)\leq Ce^{-c(a){n\choose q_n}},$$
where $C$ is some universal constant. 
\end{thm}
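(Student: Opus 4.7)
The plan is to apply the Gaussian concentration inequality \eqref{P} directly to the map
\[
F(x) := d_{BL}(\rho_n, \rho_\infty), \qquad x = (J_R)_{R \in I_n} \in \mathbb{R}^{{n \choose q_n}},
\]
and then to replace concentration around the mean by concentration around $\rho_\infty$ using the almost sure convergence already proved in \cite{FTD1}. Lemma \ref{lemlip}(b) gives that $F$ is Lipschitz with constant $L = {n \choose q_n}^{-1/2}$, so \eqref{P} supplies universal constants $C,c > 0$ with
\[
\mathbb{P}\bigl(|F - \mathbb{E}F| > t\bigr) \,\leq\, C\,e^{-c t^2 {n \choose q_n}}, \qquad t>0.
\]

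Next, I would control $\mathbb{E}F$. From \cite{FTD1} we know $\rho_n \to \rho_\infty$ almost surely in the sense of weak convergence of probability measures; since $d_{BL}$ metrizes weak convergence on $M_1(\mathbb{R})$, this gives $F \to 0$ a.s. Moreover, $F$ is uniformly bounded by $2$ (because $|f|\le 1$ in the definition \eqref{md} forces $|\langle \mu,f\rangle-\langle \nu,f\rangle|\le 2$ for any probability measures $\mu,\nu$). Bounded convergence then yields $\mathbb{E}F \to 0$, so for any given $a>0$ there exists $N(a)$ with $\mathbb{E}F \leq a/2$ for all even $n \geq N(a)$. For such $n$,
\[
\mathbb{P}(F>a) \,\leq\, \mathbb{P}\bigl(|F - \mathbb{E}F| > a/2\bigr) \,\leq\, C\,e^{-\,c a^2 {n\choose q_n}/4},
\]
which is the desired bound with $c(a) = ca^2/4$. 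For the finitely many remaining $n < N(a)$, the binomial coefficient ${n \choose q_n}$ is bounded by some $M(a)$, so the trivial bound $\mathbb{P}(F>a)\le 1$ can be absorbed into the prefactor by shrinking $c(a)$ (if necessary) so that $c(a) M(a) \le \ln C$; this handles the small-$n$ regime.

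There is no serious obstacle here: once Lemma \ref{lemlip}(b) is in hand, the concentration theorem is essentially an immediate application of the Gaussian concentration inequality together with the law-of-large-numbers result from \cite{FTD1}. The only mildly delicate point is the passage from almost sure convergence in \cite{FTD1} to convergence of $\mathbb{E}[d_{BL}(\rho_n,\rho_\infty)]$, which relies on the standard facts that $d_{BL}$ metrizes weak convergence and is uniformly bounded — both of which make bounded convergence applicable. The genuine technical content of the theorem is therefore entirely concentrated in Lemma \ref{lemlip}(b), whose Lipschitz estimate yields the crucial $1/\!\sqrt{{n\choose q_n}}$ scale.
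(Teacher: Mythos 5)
Your proposal is correct and follows essentially the same route as the paper's proof: apply the Gaussian concentration inequality \eqref{P} via the Lipschitz bound from Lemma \ref{lemlip}(b), use the a.s. convergence from \cite{FTD1} together with the uniform bound $d_{BL}(\rho_n,\rho_\infty)\le 2$ and bounded convergence to get $\mathbb{E}\,d_{BL}(\rho_n,\rho_\infty)\to 0$, and then split into $n$ large (where the mean is at most $a/2$) and $n$ small (where the trivial bound is absorbed into $c(a)$). The paper's explicit choice $c(a)=\min(c(a/2)^2,\,2^{-N_0(a)})$ plays the same role as your condition $c(a)M(a)\le\ln C$.
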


\textbf{Acknowledgement:} The first named author would like to thank Gerard Ben Arous for many helpful discussions when he was visiting NYU Shanghai.

\section{Large deviation principle for $q_n=2$}
When $q_n=2$,  the system is totally solvable and all eigenvalues can be expressed in term of eigenvalues of Gaussian random antisymmetric matrices (see \eqref{dsdd}). In this section, we will prove the LDP for the normalized empirical measure $\rho_n$ (which is defined in \eqref{2222}) of these eigenvalues. There are mainly two steps: we will first  derive the LDP in an auxiliary space $(X, d)$,  then we construct a continuous and injective map $\varphi: X\to  M_1(\mathbb R)$  which will induce the LDP in $( M_1(\mathbb R), d_{BL})$ by the Contraction Principle.
\subsection{Some integral inequalities}
  Let $J$ be the real Gaussian antisymmetric matrices as in \eqref{syk2}.  We assume the eigenvalues of $J$ are $\pm i\mu_j$ where  $\mu_j\geq 0$ for $ 1\leq j\leq n/2$. Then the joint density of these eigenvalues is \cite{M}
\begin{align*} J_n(\mu):=\frac{1}{Z_n}|\Delta(\mu)|^2e^{-\sum\limits_{j=1}^{n/2}\mu_j^2/2}1(\mu_1>\cdots>\mu_{n/2}>0),
\end{align*}where $$\Delta(\mu)=\prod\limits_{1\leq i<j\leq n/2}\left(\mu_i^2-\mu_j^2\right),\,\,\,\mu:=(\mu_1,\cdots,\mu_{n/2}) $$ is the Vandermonde determinant. By Selberg
integrals, the normalization constant \begin{align*} {Z_n}=(\pi/2)^{\frac{n}{4}}\prod\limits_{j=0}^{ n/2-1}(2j)!.
\end{align*}
Given $$x:=(x_1,\cdots,x_{n/2}), $$
 let's denote $$x_{>k}:=(x_{k+1},\cdots,x_{n/2}), \,\,\, \Delta(x_{>k}):=\prod\limits_{k< i<j\leq n/2}(x_i^2-x_j^2)$$ and $$\Sigma_{n-2k}\\:=\{(x_{k+1},\cdots,x_{n/2}):x_{k+1}>\cdots>x_{n/2}>0\}$$
  for $0\leq k\leq n/2$.  Then for $x\in\Sigma_{n}$,  we have $$x=x_{>0},\ 0<\Delta(x_{>k-1})=\Delta(x_{>k})\prod\limits_{j=k+1}^{ n/2}(x_k^2-x_j^2)<x_k^{n-2k}\Delta(x_{>k})$$ and $$0<\Delta(x)\leq \Delta(x_{>k})\prod\limits_{j=1}^{ k}x_j^{n-2j}. $$
      We will need several integral inequalities.\begin{lem}\label{lemma5}If $a,b<1/2$ and $0\leq k\leq n/2$,  we have
\begin{equation}\label{12}\mathbb Ee^{a\sum\limits_{j=1}^k\mu_j^2+b\sum\limits_{j=k+1}^{n/2}\mu_j^2}\leq 2^{nk}(1-2a)^{-k(n-k-\frac{1}{2})}(1-2b)^{-(\frac{n}{2}-k)(\frac{n-1}{2}-k)}.\end{equation} When $k=1, b=0, a=1/4$, we further have
 \begin{equation}\label{23}\int_{\Sigma_{n}}|\Delta(\mu)|^2e^{\mu_{1}^2/4
-\sum\limits_{j=1}^{n/2}\mu_j^2/2}d\mu\leq 2^{2n}{Z_{n}}.\end{equation}
\end{lem}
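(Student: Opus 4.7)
The strategy is to reduce the expectation in \eqref{12} to a product of a \emph{head} integral over $(\mu_1,\ldots,\mu_k)$, which produces the $(1-2a)$-factor, and a \emph{tail} integral over $(\mu_{k+1},\ldots,\mu_{n/2})$, which produces the $(1-2b)$-factor. Using the joint density $J_n$, rewrite
\[\mathbb E e^{a\sum_{j=1}^{k}\mu_j^2 + b\sum_{j=k+1}^{n/2}\mu_j^2} = \frac{1}{Z_n}\int_{\Sigma_n}|\Delta(\mu)|^2 e^{-(1/2-a)\sum_{j\leq k}\mu_j^2-(1/2-b)\sum_{j>k}\mu_j^2}\,d\mu,\]
and apply the pointwise bound $|\Delta(\mu)|^2 \leq |\Delta(\mu_{>k})|^2\prod_{j=1}^{k}\mu_j^{2(n-2j)}$ stated just before the lemma. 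The integrand then factorises across the head and tail coordinates, so dropping the single crossing constraint $\mu_k > \mu_{k+1}$ (which only enlarges the domain and, the integrand being non-negative, preserves the inequality) replaces $\Sigma_n$ by $\{\mu_1>\cdots>\mu_k>0\}\times\Sigma_{n-2k}$.

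On $\Sigma_{n-2k}$, the dilation $\mu_j = \nu_j/\sqrt{1-2b}$ exploits the homogeneity of $\Delta$ in $\mu_j^2$ to evaluate the tail integral as $(1-2b)^{-(n/2-k)((n-1)/2-k)}Z_{n-2k}$, where the exponent is the sum of the $2\binom{n/2-k}{2}$ contributed by the Vandermonde and the $(n/2-k)/2$ contributed by $d\mu$. On $\{\mu_1>\cdots>\mu_k>0\}$, I drop the ordering (again by non-negativity) and compute the resulting product of one-dimensional Gaussian moments $\int_0^{\infty}\mu^{2(n-2j)}e^{-(1/2-a)\mu^2}\,d\mu = 2^{n-2j-1/2}(1-2a)^{-(n-2j+1/2)}\Gamma(n-2j+1/2)$; summing the exponents $-(n-2j+1/2)$ over $j=1,\ldots,k$ gives exactly the advertised $-k(n-k-1/2)$.

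What remains is the purely combinatorial verification that
\[\frac{Z_{n-2k}}{Z_n}\,2^{k(n-k-3/2)}\prod_{j=1}^k\Gamma(n-2j+1/2)\leq 2^{nk},\]
which, using Selberg's formula $Z_n = (\pi/2)^{n/4}\prod_{j=0}^{n/2-1}(2j)!$ together with Legendre's duplication $\Gamma(m+1/2) = (2m)!\sqrt\pi/(4^m m!)$, reduces to an identity between factorials and powers of $2$ (with the $\sqrt\pi$ factors telescoping against the Selberg $(\pi/2)^{k/2}$). This bookkeeping is the main obstacle; I expect the inequality to carry substantial slack, so a loose estimate on central binomial coefficients such as $\binom{2m}{m}\leq 4^m$ should suffice to close it.

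Inequality \eqref{23} then follows by specialising \eqref{12} to $k=1,\ a=1/4,\ b=0$: the right-hand side becomes $2^n\cdot 2^{n-3/2}\cdot 1 = 2^{2n-3/2} \leq 2^{2n}$, and multiplying through by $Z_n$ recovers the stated bound.
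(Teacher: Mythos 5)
Your proposal is correct and follows essentially the same route as the paper: bound $|\Delta(\mu)|^2$ pointwise by $|\Delta(\mu_{>k})|^2\prod_{j\le k}\mu_j^{2(n-2j)}$, drop the one crossing constraint so the integral factors into a head and a tail, evaluate the head by one-dimensional Gaussian moments and the tail by dilation, and close with a combinatorial comparison to $Z_n$. The only (cosmetic) divergence is in the last step: the paper uses $\Gamma(m+\tfrac12)\le m!$ and rewrites $m!=(\pi/2)^{-1/2}Z_{n-2j+2}/Z_{n-2j}$ so the $Z$'s telescope, whereas you invoke Legendre duplication and $\binom{2m}{m}\le 4^m$; both are correct and land on the same slack inequality.
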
\begin{proof} By definition we have\begin{align*} \int_{\Sigma_{n}}|\Delta(\mu)|^2e^{-\sum\limits_{j=1}^{n/2}\mu_j^2/2}d\mu={Z_n},
\end{align*} where $d\mu$ is the Lebesgue measure.  For $a>0$, by changing of variables, we have\begin{align*} \int_{\Sigma_{n}}|\Delta(\mu)|^2e^{-a\sum\limits_{j=1}^{n/2}\mu_j^2/2}d\mu={Z_n}a^{-\frac{n}{4}-2{n/2\choose 2}}={Z_n}a^{-\frac{n(n-1)}{4}}.
\end{align*}Therefore, let's denote $m:=n-2k,$ we have
\begin{align*} &Z_n\mathbb Ee^{a\sum\limits_{j=1}^k\mu_j^2+b\sum\limits_{j=k+1}^{n/2}\mu_j^2}\\=&
\int_{\Sigma_{n}}|\Delta(\mu)|^2e^{a\sum\limits_{j=1}^k\mu_j^2+b\sum\limits_{j=k+1}^{n/2}\mu_j^2
-\sum\limits_{j=1}^{n/2}\mu_j^2/2}d\mu\\ \leq& \int_{\Sigma_{n}}\prod\limits_{j=1}^{ k}\mu_j^{2(n-2j)}|\Delta(\mu_{>k})|^2e^{-(1-2a)\sum\limits_{j=1}^k\mu_j^2/2-(1-2b)\sum\limits_{j=k+1}^{n/2}\mu_j^2/2}d\mu\\ \leq & \prod\limits_{j=1}^{ k}\int_{\mathbb{R}_+}\mu_j^{2(n-2j)}e^{-(1-2a)\mu_j^2/2}d\mu_j\cdot\int_{\Sigma_{n-2k}}|\Delta(\mu_{>k})|^2e^{-(1-2b)
\sum\limits_{j=k+1}^{n/2}\mu_j^2/2}d\mu_{>k}\\ =&\left[ \prod\limits_{j=1}^{ k}(2^{n-2j-\frac{1}{2}}(1-2a)^{-(n-2j)-\frac{1}{2}}\Gamma(n-2j+\frac{1}{2}))\right]\cdot{Z_{n-2k}}(1-2b)^{-\frac{m(m-1)}{4}}\\ \leq& (1-2a)^{-nk+k(k+1)-\frac{k}{2}}\left[\prod\limits_{j=1}^{ k}(2^{n-2}\Gamma(n-2j+1))\right]\cdot{Z_{n-2k}}(1-2b)^{-\frac{m(m-1)}{4}}\\ =& (1-2a)^{-k(n-k-\frac{1}{2})}\left[\prod\limits_{j=1}^{ k}(2^{n-2}(\pi/2)^{-\frac{1}{2}}\frac{Z_{n-2j+2}}{Z_{n-2j}})\right]\cdot{Z_{n-2k}}(1-2b)^{-\frac{m(m-1)}{4}}\\ \leq& (1-2a)^{-k(n-k-\frac{1}{2})}2^{(n-2)k}{Z_{n}}(1-2b)^{-(\frac{n}{2}-k)(\frac{n-1}{2}-k)},
\end{align*}which further gives $$\label{1} \mathbb Ee^{a\sum\limits_{j=1}^k\mu_j^2+b\sum\limits_{j=k+1}^{n/2}\mu_j^2}\leq 2^{nk}(1-2a)^{-k(n-k-\frac{1}{2})}(1-2b)^{-(\frac{n}{2}-k)(\frac{n-1}{2}-k)}.
$$For $k=1,\ b=0,\ a=1/4$, we obtain\begin{align*} &\int_{\Sigma_{n}}|\Delta(\mu)|^2e^{\mu_{1}^2/4
-\sum\limits_{j=1}^{n/2}\mu_j^2/2}d\mu \leq (1-2/4)^{-(n-\frac{3}{2})}2^{n}{Z_{n}}\leq 2^{2n}{Z_{n}},
\end{align*}which completes the proof. \end{proof}
Let's denote the subset $$ \Sigma_{n,a,b}=\left\{(x_{1},\cdots,x_{n/2})\in \Sigma_{n}:a{n\choose 2}<\sum\limits_{j=1}^{n/2}x_j^2<b{n\choose 2}\right\}.$$
 
\begin{lem}\label{lemma6} For $0<a<1<b,$ we have,  \begin{align*} \int_{\Sigma_{n,a,b}}|\Delta(\mu)|^2e^{-\sum\limits_{j=1}^{n/2}\mu_j^2/2}d\mu\geq
{Z_n}\left(1-(ae^{1-a})^{\frac{n(n-1)}{4}}-(be^{1-b})^{\frac{n(n-1)}{4}}\right).\end{align*}
\end{lem}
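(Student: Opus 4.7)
The plan is to interpret the target inequality as a tail bound for the joint density $J_n(\mu)$ of the squared eigenvalues. Since $J_n(\mu) = Z_n^{-1}|\Delta(\mu)|^2 e^{-\sum \mu_j^2/2}$ is a probability density on $\Sigma_n$, one has
\begin{equation*}
\int_{\Sigma_{n,a,b}} |\Delta(\mu)|^2 e^{-\sum_{j=1}^{n/2}\mu_j^2/2}\, d\mu \;=\; Z_n\,\mathbb{P}\!\left(a{n\choose 2} < \sum_{j=1}^{n/2}\mu_j^2 < b{n\choose 2}\right),
\end{equation*}
so it suffices to prove the two one-sided tail estimates
\begin{equation*}
\mathbb{P}\!\left(\sum \mu_j^2 \ge b{n\choose 2}\right) \le (be^{1-b})^{n(n-1)/4},\qquad \mathbb{P}\!\left(\sum \mu_j^2 \le a{n\choose 2}\right) \le (ae^{1-a})^{n(n-1)/4}.
\end{equation*}
Both will be obtained by a Chernoff-type optimization.

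For the upper tail, I would apply the exponential Markov inequality with a parameter $t \in (0,1/2)$ and invoke the $k=0$ case of Lemma~\ref{lemma5}, which gives $\mathbb{E} e^{t\sum \mu_j^2} \le (1-2t)^{-n(n-1)/4}$. The resulting bound can be written as $[e^{-2tb}(1-2t)^{-1}]^{n(n-1)/4}$; setting $s = 2t$ and minimizing $e^{-sb}(1-s)^{-1}$ over $s\in (0,1)$ yields the unique critical point $s = (b-1)/b$ (valid because $b>1$) with minimum value exactly $be^{1-b}$, which is strictly less than $1$.

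For the lower tail I would run the same argument with $e^{-t\sum \mu_j^2}$ for $t>0$. Here the change-of-variables identity $\int_{\Sigma_n}|\Delta|^2 e^{-\alpha \sum \mu_j^2/2} d\mu = Z_n \alpha^{-n(n-1)/4}$ used in the proof of Lemma~\ref{lemma5}, applied with $\alpha = 1+2t$, gives the \emph{exact} formula $\mathbb{E} e^{-t\sum \mu_j^2} = (1+2t)^{-n(n-1)/4}$. The Chernoff bound therefore reads $[e^{2ta}(1+2t)^{-1}]^{n(n-1)/4}$, and is minimized (for $0<a<1$) at $s=2t = (1-a)/a$ with value $ae^{1-a}<1$. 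Adding the two tail estimates gives the claimed inequality. I do not anticipate any substantive obstacle: the argument is a routine Chernoff bound, and the only point that demands care is to verify that the elementary calculus reproduces precisely the constants $ae^{1-a}$ and $be^{1-b}$ appearing in the statement.
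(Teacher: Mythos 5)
Your proof is correct and is essentially the argument used in the paper: the paper's tilting of the integrand by $e^{-b^{-1}\sum\mu_j^2/2}$ (resp.\ $e^{-a^{-1}\sum\mu_j^2/2}$) combined with the exact change-of-variables identity is precisely the exponential Markov (Chernoff) bound evaluated directly at the optimal parameter $s=(b-1)/b$ (resp.\ $s=(1-a)/a$), which you recover by the one-line calculus optimization. The only cosmetic difference is that the paper writes the computation as a manipulation of integrals rather than in probabilistic Chernoff language, and plugs in the optimal parameter without exhibiting the optimization.
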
\begin{proof}  For $0<a<1<b,$ we have \begin{align*} &\int_{\Sigma_{n}\setminus\Sigma_{n,0,b}}|\Delta(\mu)|^2e^{-\sum\limits_{j=1}^{n/2}\mu_j^2/2}d\mu\\ \leq&
\int_{\Sigma_{n}\setminus\Sigma_{n,0,b}}|\Delta(\mu)|^2e^{-b^{-1}\sum\limits_{j=1}^{n/2}\mu_j^2/2}e^{-(1-b^{-1}) b{n\choose 2}/2}d\mu \\ \leq&
\int_{\Sigma_{n}}|\Delta(\mu)|^2e^{-b^{-1}\sum\limits_{j=1}^{n/2}\mu_j^2/2}e^{-(1-b^{-1}) b{n\choose 2}/2}d\mu \\=&b^{\frac{n(n-1)}{4}}\int_{\Sigma_{n}}|\Delta(\mu)|^2e^{-\sum\limits_{j=1}^{n/2}\mu_j^2/2}e^{-(b-1) {n\choose 2}/2}d\mu\\=&b^{\frac{n(n-1)}{4}}Z_n  e^{-(b-1)\frac{n(n-1)}{4}}={Z_n}(be^{1-b})^{\frac{n(n-1)}{4}},
\end{align*}here, we used the fact that $1-b^{-1}>0$ and \begin{align*} -\sum\limits_{j=1}^{n/2}\mu_j^2/2=-b^{-1}\sum\limits_{j=1}^{n/2}\mu_j^2/2-(1-b^{-1}) \sum\limits_{j=1}^{n/2}\mu_j^2/2\\ \leq-b^{-1}\sum\limits_{j=1}^{n/2}\mu_j^2/2-(1-b^{-1}) b{n\choose 2}/2
\end{align*} for $\mu\in \Sigma_{n}\setminus\Sigma_{n,0,b}$.  Similarly, \begin{align*} \int_{\Sigma_{n,0,a}}|\Delta(\mu)|^2e^{-\sum\limits_{j=1}^{n/2}\mu_j^2/2}d\mu\leq
\int_{\Sigma_{n}}|\Delta(\mu)|^2e^{-a^{-1}\sum\limits_{j=1}^{n/2}\mu_j^2/2}e^{-(1-a^{-1}) a{n\choose 2}/2}d\mu \\=a^{\frac{n(n-1)}{4}}{Z_n}e^{-(a-1)\frac{n(n-1)}{4}}={Z_n}(ae^{1-a})^{\frac{n(n-1)}{4}}.
\end{align*}Therefore, we will finish the proof by observing the following identity, 
 \begin{align*} Z_n=\int_{\Sigma_{n}}|\Delta(\mu)|^2e^{-\sum\limits_{j=1}^{n/2}\mu_j^2/2}d\mu=
\int_{\Sigma_{n}\setminus\Sigma_{n,0,b}}|\Delta(\mu)|^2e^{-\sum\limits_{j=1}^{n/2}\mu_j^2/2}d\mu\\
+\int_{\Sigma_{n,a,b}}|\Delta(\mu)|^2e^{-\sum\limits_{j=1}^{n/2}\mu_j^2/2}d\mu
+\int_{\Sigma_{n,0,a}}|\Delta(\mu)|^2e^{-\sum\limits_{j=1}^{n/2}\mu_j^2/2}d\mu.\end{align*}
\end{proof}
For $ \delta>0$, let's denote the subset $$ \Sigma_{n,>\delta}=\left\{(x_{1},\cdots,x_{n/2})\in \Sigma_{n}:x_1^2>\delta{n\choose 2}\right \}$$ and $$\Sigma_{n,a,b,\delta}=\Sigma_{n,a,b}\setminus \Sigma_{n,>\delta}.$$
We will use Lemmas \ref{lemma5} and  \ref{lemma6} to prove 
 \begin{lem}\label{lemma7}  We have the following estimates, \\
(a) If $0<a<1<b,\ \delta>0,$ then \begin{align*}&\liminf_{n\to+\infty}\frac{1}{n^2}\ln\int_{\Sigma_{n,a,b,\delta}}Z_n^{-1}|\Delta(\mu)|^2e^{
-\sum\limits_{j=1}^{n/2}\mu_j^2/2}d\mu \geq  0.
\end{align*}(b) If $0<a<b\leq 1,\ \delta>0,$ then\begin{align*}&\liminf_{n\to+\infty}\frac{4}{n^2}\ln\int_{\Sigma_{n,a,b,\delta}}Z_n^{-1}|\Delta(\mu)|^2e^{
-\sum\limits_{j=1}^{n/2}\mu_j^2/2}d\mu \geq  1-b+\ln b.
\end{align*}(c) If $1\leq a<b,\ \delta>0,$ then\begin{align*}&\liminf_{n\to+\infty}\frac{4}{n^2}\ln\int_{\Sigma_{n,a,b,\delta}}Z_n^{-1}|\Delta(\mu)|^2e^{
-\sum\limits_{j=1}^{n/2}\mu_j^2/2}d\mu \geq  1-a+\ln a.
\end{align*}
\end{lem}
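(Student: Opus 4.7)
Part (a) follows essentially from Lemmas~\ref{lemma5} and \ref{lemma6}. Since $a<1<b$, Lemma~\ref{lemma6} gives $Z_n^{-1}\int_{\Sigma_{n,a,b}}|\Delta|^2 e^{-\sum_{j=1}^{n/2}\mu_j^2/2}d\mu\to 1$. To handle the additional constraint $\mu_1^2\leq \delta{n\choose 2}$, I would write $e^{-\sum\mu_j^2/2}=e^{-\mu_1^2/4}\,e^{\mu_1^2/4-\sum\mu_j^2/2}$ on the complement $\{\mu_1^2>\delta{n\choose 2}\}$ and invoke Lemma~\ref{lemma5}'s bound \eqref{23}: the excluded piece is at most $2^{2n}e^{-\delta{n\choose 2}/4}Z_n$, which vanishes at rate $e^{-\Theta(n^2)}$. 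Hence $Z_n^{-1}\int_{\Sigma_{n,a,b,\delta}}|\Delta|^2 e^{-\sum\mu_j^2/2}d\mu\to 1$, and the liminf of $n^{-2}\ln(\cdot)$ is $\geq 0$.

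For parts (b) and (c) the interval $(a,b)$ lies strictly on one side of the typical value $1$ of $\sum\mu_j^2/{n\choose 2}$, so Lemma~\ref{lemma6} no longer applies directly. My plan is to rescale $\mu=\sqrt{c}\,\nu$ for a carefully chosen perturbation $c$ that places the typical value of the rescaled measure strictly inside the rescaled region. The Jacobian computation (as in the proofs of Lemmas~\ref{lemma5} and \ref{lemma6}) yields
\[\int_{\Sigma_{n,a,b,\delta}}|\Delta(\mu)|^2 e^{-\sum\mu_j^2/2}d\mu = c^{n(n-1)/4}\int_{\Sigma_{n,a/c,b/c,\delta/c}}|\Delta(\nu)|^2 e^{-c\sum\nu_j^2/2}d\nu.\]
For part (b) I would take $c=b(1-\eta)$ with small $\eta>0$, so that $a/c<1<b/c$, and further restrict the integration to the sub-region $\Sigma_{n,1-\epsilon,1+\eta,\delta/c}$, which is contained in $\Sigma_{n,a/c,b/c,\delta/c}$ whenever $(1-\epsilon)(1-\eta)>a/b$; for (c), symmetrically, I would take $c=a(1+\eta)$. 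Because the restricted region straddles $1$, Lemma~\ref{lemma6} (applied to the rescaled $\nu$-measure) together with the same Lemma~\ref{lemma5} trick as in (a) gives $Z_n^{-1}\int_{\Sigma_{n,1-\epsilon,1+\eta,\delta/c}}|\Delta(\nu)|^2 e^{-\sum\nu_j^2/2}d\nu\to 1$.

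To extract the correct rate I would split $e^{-c\sum\nu_j^2/2}=e^{-\sum\nu_j^2/2}\,e^{(1-c)\sum\nu_j^2/2}$ and bound the tilt factor using whichever side of the restriction on $\sum\nu_j^2$ has the correct sign. For (b), $1-c>0$ and $\sum\nu_j^2>(1-\epsilon){n\choose 2}$ give $e^{(1-c)\sum\nu_j^2/2}\geq e^{(1-c)(1-\epsilon){n\choose 2}/2}$; for (c), $1-c<0$ and $\sum\nu_j^2<(1+\eta){n\choose 2}$ give $e^{(1-c)\sum\nu_j^2/2}\geq e^{(1-c)(1+\eta){n\choose 2}/2}$. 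Collecting the three factors $c^{n(n-1)/4}$, the tilt, and the essentially full-mass remaining integral,
\[\liminf_{n\to+\infty}\frac{4}{n^2}\ln\int_{\Sigma_{n,a,b,\delta}}Z_n^{-1}|\Delta|^2 e^{-\sum\mu_j^2/2}d\mu \geq \ln c + (1-c)(1-\epsilon)\]
for (b), and the analogous expression with $(1+\eta)$ in place of $(1-\epsilon)$ for (c). Sending $\epsilon,\eta\to 0$ so that $c\to b$ (resp.\ $c\to a$) produces the targets $\ln b+(1-b)=1-b+\ln b$ and $\ln a+(1-a)=1-a+\ln a$. The main obstacle is precisely this choice of perturbation: the naive pick $c=b$ (resp.\ $c=a$) would place the typical value exactly on the boundary of the rescaled region, so Lemma~\ref{lemma6} would fail and one would have to invoke a delicate CLT for $\sum\mu_j^2\sim\chi^2_{n(n-1)/2}$ to capture a positive fraction of the mass on the correct side; the small shift $\eta>0$ circumvents this while the subsequent limit $\eta\to 0$ still produces the sharp Legendre-type rate.
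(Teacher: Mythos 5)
Your proposal is correct, and the core mechanism in parts (b) and (c) is the same as the paper's: a change of variables that brings the typical value of $\sum\mu_j^2$ into the rescaled region, combined with an exponential tilt whose cost supplies the Legendre-type rate, with part (a) providing the ``essentially full mass'' lower bound for the rescaled integral. The only real difference is in how the final optimization is organized. The paper does not restrict to a sub-region near $1$: it writes $e^{-\sum\mu_j^2/2}\geq e^{(\lambda-1)a{n\choose 2}/2}e^{-\lambda\sum\mu_j^2/2}$ on the full set $\Sigma_{n,a,b,\delta}$ (using only the lower endpoint $a$), obtains $\liminf\geq(\lambda-1)a-\ln\lambda$ for $\lambda\in(1/b,1/a)\cap(1,\infty)$, sends $\lambda\to(1/a)-$ to get $1-a+\ln a$, and then uses the monotonicity $\Sigma_{n,a,b,\delta}\supset\Sigma_{n,a',b,\delta}$ with $a'\to b-$ to upgrade this to $1-b+\ln b$. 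You instead shrink the integration domain to $\Sigma_{n,1-\epsilon,1+\eta,\delta/c}$, which lets you evaluate the tilt cost at $(1-\epsilon){n\choose 2}$ rather than at $a{n\choose 2}$; this produces the sharp rate $1-b+\ln b$ in a single limit $c\to b$, $\epsilon\to 0$, dispensing with the paper's second monotonicity step. Both routes are equally valid; yours is slightly more direct, the paper's avoids having to check the nested-region inclusion $\Sigma_{n,1-\epsilon,1+\eta,\delta/c}\subset\Sigma_{n,a/c,b/c,\delta/c}$.
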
\begin{proof}By \eqref{23}, we first have
\begin{align*}\int_{\Sigma_{n,>\delta}}|\Delta(\mu)|^2e^{
-\sum\limits_{j=1}^{n/2}\mu_j^2/2}d\mu&\leq \int_{\Sigma_{n}}|\Delta(\mu)|^2e^{
\mu_{1}^2/4-\sum\limits_{j=1}^{n/2}\mu_j^2/2}e^{-\delta{n\choose 2}/4}d\mu\\ &\leq 2^{2n}{Z_{n}}e^{-\delta{n\choose 2}/4}.
\end{align*}
Then, by Lemma \ref{lemma6}, we further have \begin{align*}&\int_{\Sigma_{n,a,b,\delta}}|\Delta(\mu)|^2e^{
-\sum\limits_{j=1}^{n/2}\mu_j^2/2}d\mu\\ \geq& \int_{\Sigma_{n,a,b}}|\Delta(\mu)|^2e^{-\sum\limits_{j=1}^{n/2}\mu_j^2/2}d\mu-\int_{\Sigma_{n,>\delta}}|\Delta(\mu)|^2e^{
-\sum\limits_{j=1}^{n/2}\mu_j^2/2}d\mu\\ \geq & {Z_n}\left(1-(ae^{1-a})^{\frac{n(n-1)}{4}}-(be^{1-b})^{\frac{n(n-1)}{4}}-2^{2n}e^{-\delta{n\choose 2}/4}\right).
\end{align*}Thus if $n$ is large enough, for every fixed $0<a<1<b,\ \delta>0$, using $0<ae^{1-a}<1,\ 0<be^{1-b}<1,$ we have  \begin{equation}\label{dd}\int_{\Sigma_{n,a,b,\delta}}|\Delta(\mu)|^2e^{
-\sum\limits_{j=1}^{n/2}\mu_j^2/2}d\mu \geq  {Z_n}/2,
\end{equation}which implies \begin{align*}&\liminf_{n\to+\infty}\frac{1}{n^2}\ln\int_{\Sigma_{n,a,b,\delta}}Z_n^{-1}|\Delta(\mu)|^2e^{
-\sum\limits_{j=1}^{n/2}\mu_j^2/2}d\mu \geq  0,
\end{align*}which finishes part (a).

 For every fixed $a,b,\lambda, \delta>0$ such that $0<a<1/\lambda<b$ (i.e., $0<\lambda a<1<\lambda b$), if we change variables first and then apply \eqref{dd}, we will have \begin{align*} \int_{\Sigma_{n,a,b,\delta}}|\Delta(\mu)|^2e^{-\lambda\sum\limits_{j=1}^{n/2}\mu_j^2/2}d\mu&=
\lambda^{-\frac{n(n-1)}{4}}\int_{\Sigma_{n,\lambda a,\lambda b,\lambda\delta}}|\Delta(\mu)|^2e^{-\sum\limits_{j=1}^{n/2}\mu_j^2/2}d\mu\\ &\geq
{Z_n}\lambda^{-\frac{n(n-1)}{4}}/2.
\end{align*}

If $\lambda>1$, we  have \begin{align*} \int_{\Sigma_{n,a,b,\delta}}|\Delta(\mu)|^2e^{-\sum\limits_{j=1}^{n/2}\mu_j^2/2}d\mu &\geq
e^{(\lambda-1)a{n\choose 2}/2}\int_{\Sigma_{n,a,b,\delta}}|\Delta(\mu)|^2e^{-\lambda\sum\limits_{j=1}^{n/2}\mu_j^2/2}d\mu\\ &\geq
e^{(\lambda-1)a\frac{n(n-1)}{4}}{Z_n}\lambda^{-\frac{n(n-1)}{4}}/2.
\end{align*}
Therefore, if $0<a<b\leq1,\ \delta>0,$ for every $\lambda\in(1/b,1/a) $ which is greater than 1, the above arguments imply \begin{align*}&\liminf_{n\to+\infty}\frac{4}{n^2}\ln\int_{\Sigma_{n,a,b,\delta}}Z_n^{-1}|\Delta(\mu)|^2e^{
-\sum\limits_{j=1}^{n/2}\mu_j^2/2}d\mu \geq  (\lambda-1)a-\ln \lambda.
\end{align*}Letting $ \lambda\to  (1/ a)-$, we have\begin{align*}&\liminf_{n\to+\infty}\frac{4}{n^2}\ln\int_{\Sigma_{n,a,b,\delta}}Z_n^{-1}|\Delta(\mu)|^2e^{
-\sum\limits_{j=1}^{n/2}\mu_j^2/2}d\mu \geq  1-a+\ln a.
\end{align*}Notice that for every $0\leq a<a'<b\leq1, $ we have\begin{align*}&\int_{\Sigma_{n,a,b,\delta}}Z_n^{-1}|\Delta(\mu)|^2e^{
-\sum\limits_{j=1}^{n/2}\mu_j^2/2}d\mu \geq  \int_{\Sigma_{n,a',b,\delta}}Z_n^{-1}|\Delta(\mu)|^2e^{
-\sum\limits_{j=1}^{n/2}\mu_j^2/2}d\mu,
\end{align*}and thus \begin{align*}&\liminf_{n\to+\infty}\frac{4}{n^2}\ln\int_{\Sigma_{n,a,b,\delta}}Z_n^{-1}|\Delta(\mu)|^2e^{
-\sum\limits_{j=1}^{n/2}\mu_j^2/2}d\mu\\ \geq&  \liminf_{n\to+\infty}\frac{4}{n^2}\ln\int_{\Sigma_{n,a',b,\delta}}Z_n^{-1}|\Delta(\mu)|^2e^{
-\sum\limits_{j=1}^{n/2}\mu_j^2/2}d\mu \geq  1-a'+\ln a'.
\end{align*}Letting $ a'\to b-$, we have\begin{align*}&\liminf_{n\to+\infty}\frac{4}{n^2}\ln\int_{\Sigma_{n,a,b,\delta}}Z_n^{-1}|\Delta(\mu)|^2e^{
-\sum\limits_{j=1}^{n/2}\mu_j^2/2}d\mu \geq  1-b+\ln b,
\end{align*} which finishes part (b). The proof of part (c) follows part (b) similarly and we omit the proof.

\end{proof}
\subsection{LDP in an auxiliary space}\label{ldpau}
Let's prove Proposition \ref{ldpp}. The whole proof   is separated into three parts.
\subsubsection{Lower and upper bounds}\label{boud}
We will prove the following \begin{lem}\label{lems}
\begin{align*}&\lim_{\epsilon\to 0+}\liminf_{n\to+\infty}\frac{4}{n^2}\ln \mathbb{P}(d(\gamma_n,x)<\epsilon) \geq-I(x),\\ &\lim_{\epsilon\to 0+}\limsup_{n\to+\infty}\frac{4}{n^2}\ln \mathbb{P}(d(\gamma_n,x)<\epsilon) \leq-I(x),
\end{align*}where $I(x)$ is given by \eqref{i}. 
\end{lem}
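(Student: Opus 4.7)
The plan is to unpack $d(\gamma_n,x)<\epsilon$ into explicit box constraints on the $\mu_j$'s and evaluate
\[\mathbb{P}(d(\gamma_n,x)<\epsilon)=Z_n^{-1}\int_{A_{n,x,\epsilon}}|\Delta(\mu)|^2e^{-\sum\mu_j^2/2}\,d\mu\]
via the Selberg-type integral inequalities of Lemmas~\ref{lemma5}--\ref{lemma7}, where $A_{n,x,\epsilon}$ is cut out by $|{n\choose 2}^{-1}\sum_j\mu_j^2-x_0|<\epsilon$ together with $|{n\choose 2}^{-1/2}\mu_j-x_j|<\epsilon$ for all $1\leq j\leq n/2$.

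For the lower bound, I would fix a truncation $k$ and localize each $\mu_j$ ($1\leq j\leq k$) in a window of width $\epsilon\sqrt{{n\choose 2}}$ around $x_j\sqrt{{n\choose 2}}$. Using $|\Delta(\mu)|^2\geq\Delta(\mu_{>k})^2\prod_{j\leq k<i}(\mu_j^2-\mu_i^2)^2$, the integration of $\prod_{j\leq k}e^{-\mu_j^2/2}$ over the top boxes produces $\exp\bigl(-\tfrac12\sum_{j\leq k}x_j^2\,{n\choose 2}\bigr)$ up to subexponential factors. What remains is an $(n/2-k)$-dimensional $\beta=2$ Selberg integral over $\mu_{>k}$ with total squared mass constrained near $\sigma_k{n\choose 2}$, where $\sigma_k:=x_0-\sum_{j\leq k}x_j^2$, and with the cutoff $\mu_{k+1}\leq\epsilon\sqrt{{n\choose 2}}$. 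This fits exactly the setup of Lemma~\ref{lemma7}(b) (when $\sigma_k\leq 1$) or (c) (when $\sigma_k\geq 1$), both yielding bulk rate $1-\sigma_k+\ln\sigma_k$. Combining, the total exponent becomes $-\sum_{j\leq k}x_j^2+1-\sigma_k+\ln\sigma_k=1-x_0+\ln\sigma_k$, which converges to $1-x_0+\ln J(x)=-I(x)$ as $k\to\infty$ and $\epsilon\to 0$.

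For the upper bound, I would use the complementary factorization $\Delta(\mu)\leq\Delta(\mu_{>k})\prod_{j\leq k}\mu_j^{n-2j}$ combined with a Chebyshev-style tilt as in the proof of Lemma~\ref{lemma6}: on the event $\sum_{j>k}\mu_j^2\approx\sigma_k{n\choose 2}\pm O(\epsilon{n\choose 2})$, write $e^{-\sum_{j>k}\mu_j^2/2}\leq e^{(\lambda-1)\sigma_k{n\choose 2}/2}e^{-\lambda\sum_{j>k}\mu_j^2/2}$ for a suitable $\lambda>0$ and rescale $\mu_{>k}\mapsto\mu_{>k}/\sqrt\lambda$ to recover $Z_{n-2k}$. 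Pairing the top-box contribution $e^{-\sum_{j\leq k}x_j^2\,{n\choose 2}/2}$ with the optimal tilt $\lambda=1/\sigma_k$ produces the matching exponent $-I(x)$. The main obstacle is the cross Vandermonde $\prod_{j\leq k<i}(\mu_j^2-\mu_i^2)^2$ (and its companion upper bound $\prod_{j\leq k}\mu_j^{2(n-2j)}$) coupling top and bulk: these must be shown to contribute only a subexponential factor relative to $Z_n/Z_{n-2k}$, which reduces to a Stirling computation of the same type used in the proof of Lemma~\ref{lemma5}. Finally, the degenerate case $x\in X_0$, where $J(x)=0$ and $I(x)=+\infty$, is handled in the upper bound by sending $\lambda\to\infty$ in the tilt, which forces super-exponential decay.
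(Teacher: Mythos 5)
Your lower-bound plan follows the paper's proof in spirit, and your exponent bookkeeping ($-\sum_{j\leq k}x_j^2 + 1 - \sigma_k + \ln\sigma_k = 1-x_0+\ln\sigma_k\to -I(x)$ as $k\to\infty$) is exactly right, but there is a genuine gap in how you localize the top $k$ eigenvalues. You put each $\mu_j$ (for $j\le k$) in a window of width $\epsilon\sqrt{\tbinom n2}$ \emph{centered at} $x_j\sqrt{\tbinom n2}$. If $x$ has ties --- say $x_1=x_2$, which is perfectly allowed in $X$ --- these windows coincide, and then neither $\Delta(\mu_{\le k})^2$ nor the cross Vandermonde $\prod_{j\le k<i}(\mu_j^2-\mu_i^2)^2$ has any useful lower bound on your event. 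Relatedly, your inequality $|\Delta(\mu)|^2\geq \Delta(\mu_{>k})^2\prod_{j\le k<i}(\mu_j^2-\mu_i^2)^2$ silently drops $\Delta(\mu_{\le k})^2$ from the exact factorization; this only helps if that factor is $\geq 1$, which fails when eigenvalues are close. The paper fixes both issues at once by using \emph{staggered, pairwise disjoint} windows $x_j+\delta_{2j}<\tbinom n2^{-1/2}\mu_j<x_j+\delta_{2j-1}$ with $\delta_l=\frac{4k-l}{4k}\delta$: this forces $\mu_l-\mu_{l+1}>\frac{\delta}{4k}\tbinom n2^{1/2}$ and makes $\Delta(\mu)/\Delta(\mu_{>k})\geq(\frac{\delta}{4k}\tbinom n2^{1/2})^{nk-k(k+1)}$, a clean subexponential floor. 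You need this device (or an equivalent) for your lower bound to go through.

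For the upper bound you propose a direct integral estimate: box the top $k$ coordinates, tilt the bulk, factorize the Vandermonde, and control the cross term by a Stirling computation. This can be made to work, and you correctly identify the cross Vandermonde as the crux, but it is substantially more work than what the paper does. The paper instead replaces the event $\{d(\gamma_n,x)<\epsilon\}$ by the larger half-space event $\{G(\gamma_n)>G(x)-\delta\}$, where $G(y)=(A-B)\sum_{j\le k}y_j^2+By_0$ is a continuous linear functional, and applies a single Chernoff bound $\mathbb{P}(G(\gamma_n)>G(x)-\delta)\le e^{-\tbinom n2(G(x)-\delta)}\,\mathbb{E}e^{\tbinom n2 G(\gamma_n)}$. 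The bivariate exponential moment is exactly the content of \eqref{12} in Lemma~\ref{lemma5}, so all the Vandermonde bookkeeping is done once, inside that lemma; the proof of Lemma~\ref{lems} then reduces to optimizing over $A\to(1/2)^-$, $k\to\infty$, $\delta\to 0^+$, and finally $B$ (take $B=(1-1/a)/2$ when $a=J(x)>0$, $B\to-\infty$ when $a=0$). This also handles your degenerate case $x\in X_0$ uniformly (no separate $\lambda\to\infty$ argument). The two routes end up with the same exponent, but the paper's avoids ever having to show that the cross Vandermonde contributes only subexponentially on the upper-bound side --- a step that your sketch flags as "the main obstacle" but does not actually carry out.
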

Let's first consider the lower bound. Given $x=(x_j)_{j=0}^{\infty}\in X, $ by definition we have $\sum\limits_{j=1}^{+\infty}x_j^2\leq x_0<+\infty$ and $\lim\limits_{\delta\to0+}\sum\limits_{j=1}^{+\infty}\min(x_j^2,\delta)=0 $ by monotone convergence theorem. For every $ \epsilon\in(0,1),$ there exists $k>0$ such that $\sum\limits_{j=k+1}^{+\infty}x_j^2<\epsilon^2/2.$ Let's take $ \delta\in(0,\epsilon)$ such that $\sqrt{k}\delta<\sqrt{x_0+\epsilon/2}-\sqrt{x_0}, $ then we  have \begin{lem}\label{claim1}Let $y=(y_j)_{j=0}^{\infty} \in X_0$, i.e., $y_0= \sum\limits_{j=1}^{+\infty}y_j^2$.  If $x_j<y_j<x_j+\delta$ for $1\leq j\leq k$, $y_{k+1}<\epsilon$ and $a<\sum\limits_{j=k+1}^{+\infty}y_j^2<a+\epsilon/2, $ where $a:=x_0-\sum\limits_{j=1}^{+\infty}x_j^2\geq 0 $, then $d(x,y)<\epsilon.$\end{lem}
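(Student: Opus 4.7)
The plan is to verify $d(x,y) = \sup_{j \ge 0} |x_j - y_j| < \epsilon$ by splitting the supremum into three index ranges: $j = 0$, $1 \le j \le k$, and $j \ge k+1$, and bounding each range separately using the hypotheses and the defining properties of $X$ and $X_0$.

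The middle range $1 \le j \le k$ is free: the assumption $x_j < y_j < x_j + \delta$ gives $|x_j - y_j| < \delta < \epsilon$ directly. For the tail $j \ge k+1$, I would invoke the monotonicity encoded in the definition of $X$, namely $x_j \ge x_{j+1} \ge 0$ and $y_j \ge y_{j+1} \ge 0$ for $j \ge 1$. Since $x_{k+1}^2 \le \sum_{j \ge k+1} x_j^2 < \epsilon^2/2 < \epsilon^2$, we get $0 \le x_j \le x_{k+1} < \epsilon$, and by hypothesis $0 \le y_j \le y_{k+1} < \epsilon$, so $|x_j - y_j| < \epsilon$.

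The remaining and main estimate is the coordinate $j=0$. Here I would use that $y \in X_0$ means $y_0 = \sum_{j \ge 1} y_j^2$, whereas $x_0 = a + \sum_{j \ge 1} x_j^2$ by definition of $a$, so
\begin{equation*}
y_0 - x_0 = \sum_{j=1}^{k}(y_j^2 - x_j^2) + \Bigl(\sum_{j \ge k+1} y_j^2 - a\Bigr) - \sum_{j \ge k+1} x_j^2.
\end{equation*}
The second term lies in $(0, \epsilon/2)$ by hypothesis, and the third lies in $[0, \epsilon^2/2) \subset [0, \epsilon/2)$ since $\epsilon < 1$. The delicate piece is the first sum: writing $y_j^2 - x_j^2 = (y_j - x_j)(y_j + x_j) < \delta(2x_j + \delta)$ and applying Cauchy--Schwarz with $\sum_{j=1}^k x_j \le \sqrt{k}\sqrt{x_0}$, it is bounded above by $\sqrt{k}\,\delta\,(2\sqrt{x_0} + \sqrt{k}\,\delta)$. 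The constraint $\sqrt{k}\,\delta < \sqrt{x_0 + \epsilon/2} - \sqrt{x_0}$ is exactly what makes $(\sqrt{x_0} + \sqrt{k}\,\delta)^2 < x_0 + \epsilon/2$, which rearranges to $\sqrt{k}\,\delta(2\sqrt{x_0} + \sqrt{k}\,\delta) < \epsilon/2$. Combining these three bounds yields $|y_0 - x_0| < \epsilon$.

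The only subtle point is the $j = 0$ inequality, and specifically recognizing that the somewhat unmotivated-looking constraint $\sqrt{k}\,\delta < \sqrt{x_0 + \epsilon/2} - \sqrt{x_0}$ is tailor-made to absorb the cross-term $2\sqrt{x_0 k}\,\delta$ arising from Cauchy--Schwarz; once this is spotted, the estimate collapses cleanly and each of the three ranges contributes less than $\epsilon$, finishing the proof.
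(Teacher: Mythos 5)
Your proof is correct and follows essentially the same route as the paper's: the tail $j\geq k+1$ and middle $1\leq j\leq k$ ranges are handled identically, and for $j=0$ the key observation — that the constraint $\sqrt{k}\,\delta<\sqrt{x_0+\epsilon/2}-\sqrt{x_0}$ together with Cauchy--Schwarz forces $\sum_{j=1}^{k}(y_j^2-x_j^2)<\epsilon/2$ — is exactly the paper's inequality \eqref{2}. The only cosmetic difference is that you decompose $y_0-x_0$ into three signed pieces, whereas the paper sandwiches $y_0$ between $x_0-\epsilon/2$ and $x_0+\epsilon$ directly.
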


\begin{proof}  Since $x_{k+1}^2\leq \sum\limits_{j=k+1}^{+\infty}x_j^2<\epsilon^2/2,$ thus $0\leq x_{k+1}<\epsilon.$ By assumption $0\leq y_{k+1}<\epsilon$, we have $$\sup\limits_{j\geq k+1}|x_j-y_j|\leq \sup\limits_{j\geq k+1}\max(x_j,y_j)\leq \max(x_{k+1},y_{k+1})<\epsilon,$$ where we used the fact that  the coordinate of $x,y \in X$ is decreasing. 

If we combine this with the assumption that $|x_j-y_j|<\delta<\epsilon$ for $1\leq j\leq k$, we must have \begin{equation}\label{distd}d(x,y)=\sup\limits_{j\geq 0}|x_j-y_j|\leq \max(|x_0-y_0|,\epsilon).\end{equation}Notice that $\sum\limits_{j=1}^{k}x_j^2<\sum\limits_{j=1}^{k}y_j^2<\sum\limits_{j=1}^{k}(x_j+\delta)^2, $ that $\sqrt{k}\delta<\sqrt{x_0+\epsilon/2}-\sqrt{x_0}, $ that\begin{align}\nonumber0<&\sum\limits_{j=1}^{k}(x_j+\delta)^2-\sum\limits_{j=1}^{k}x_j^2=
2\delta\sum\limits_{j=1}^{k}x_j+k\delta^2
\leq 2\delta\left(k\sum\limits_{j=1}^{k}x_j^2\right)^{\frac{1}{2}}+k\delta^2\\ \label{2}&\leq 2\delta\left(kx_0\right)^{\frac{1}{2}}+k\delta^2=(\sqrt{x_0}+\sqrt{k}\delta)^2-x_0<\epsilon/2,
\end{align}and that $a<\sum\limits_{j=k+1}^{+\infty}y_j^2<a+\epsilon/2, $ we have $\sum\limits_{j=1}^{k}x_j^2+a<\sum\limits_{j=1}^{+\infty}y_j^2=y_0<\sum\limits_{j=1}^{k}(x_j+\delta)^2+a+\epsilon/2
<\sum\limits_{j=1}^{k}x_j^2+\epsilon/2+a+\epsilon/2.$ We also have $ \sum\limits_{j=k+1}^{+\infty}x_j^2<\epsilon^2/2<\epsilon/2 $ and $\sum\limits_{j=1}^{k}x_j^2\leq \sum\limits_{j=1}^{+\infty}x_j^2=x_0-a=\sum\limits_{j=1}^{k}x_j^2+\sum\limits_{j=k+1}^{+\infty}x_j^2
<\sum\limits_{j=1}^{k}x_j^2+\epsilon/2, $ thus $x_0-\epsilon/2<\sum\limits_{j=1}^{k}x_j^2+a<y_0<\sum\limits_{j=1}^{k}x_j^2+a+\epsilon\leq x_0+\epsilon,$ i.e., $|x_0-y_0|<\epsilon$. This completes the proof by \eqref{distd}.\end{proof}

 Given $x\in X$ and $0<\delta/4<\delta<\epsilon$ defined above, recall $\gamma_n\in X_0$, by Lemma \ref{claim1} where  we replace $y$ by $\gamma_n$, for $n>2k,\ n/2\in\mathbb{Z}$, we have
 \begin{align*}& \mathbb{P}(d(\gamma_n,x)<\epsilon)\geq \mathbb{P}\bigg(x_j<{n\choose 2}^{-\frac{1}{2}}\mu_j<x_j+\delta,\ \forall\ 1\leq j\leq k;\\ &{n\choose 2}^{-\frac{1}{2}}\mu_{k+1}<\delta/4;\ a{n\choose 2}<\sum\limits_{j=k+1}^{n/2}\mu_j^2<(a+\epsilon/2){n\choose 2}\bigg).
\end{align*}

For $n$ large enough, we have $a{n\choose 2}<(a+\epsilon/4){n-2k\choose 2}$. Let $m:=n-2k$ again, and $\delta_{j}:=\dfrac{4k-j}{4k}\delta\in [\delta/2,\delta) $ for $1\leq j\leq 2k,$ then we have
\begin{align}\nonumber& \mathbb{P}(d(\gamma_n,x)<\epsilon)\geq \mathbb{P}\bigg(x_j+\delta_{2j}<{n\choose 2}^{-\frac{1}{2}}\mu_j<x_j+\delta_{2j-1},\ \forall\ 1\leq j\leq k;\\ \nonumber&\mu_{k+1}<{m\choose 2}^{\frac{1}{2}}\delta/4;\ (a+\epsilon/4){m\choose 2}<\sum\limits_{j=k+1}^{n/2}\mu_j^2<(a+\epsilon/2){m\choose 2}\bigg)=\\ \label{non} &\int_{\cap_{j=1}^{k}\{x_j+\delta_{2j}<{n\choose 2}^{-\frac{1}{2}}\mu_j<x_j+\delta_{2j-1}\}}\int_{\Sigma_{m,a+\epsilon/4,a+\epsilon/2,(\delta/4)^2}}
Z_n^{-1}|\Delta(\mu)|^2e^{-\sum\limits_{j=1}^{n/2}\mu_j^2/2}d\mu.
\end{align}By definition of $X$,  we have $x_j\geq x_{j+1}\geq 0,$ thus if $x_j+\delta_{2j}<{n\choose 2}^{-\frac{1}{2}}\mu_j<x_j+\delta_{2j-1} $ for $1\leq j\leq k$, we will have $$\mu_j-\mu_{j+1}> (\delta_{2j}-\delta_{2j+1}){n\choose 2}^{\frac{1}{2}}=\frac{\delta}{4k}{n\choose 2}^{\frac{1}{2}}$$ for $1\leq j< k$ and $$\mu_k> \delta_{2k}{n\choose 2}^{\frac{1}{2}}=\frac{\delta}{2}{n\choose 2}^{\frac{1}{2}}.$$ If $\mu_{k+1}<{m\choose 2}^{\frac{1}{2}}\delta/4<{n\choose 2}^{\frac{1}{2}}\delta/4 $, then we have $$\mu_k-\mu_{k+1}> \frac{\delta}{2}{n\choose 2}^{\frac{1}{2}}-\frac{\delta}{4}{n\choose 2}^{\frac{1}{2}}=\frac{\delta}{4}{n\choose 2}^{\frac{1}{2}}\geq \frac{\delta}{4k}{n\choose 2}^{\frac{1}{2}}.$$
 Therefore,  for $1\leq l\leq k,$ we must have  $$\frac{\Delta(\mu_{>l-1})}{\Delta(\mu_{>l})}=\prod\limits_{j=l+1}^{ n/2}(\mu_l^2-\mu_j^2)\geq (\mu_l^2-\mu_{l+1}^2)^{n/2-l}\geq (\mu_l-\mu_{l+1})^{n-2l}\geq \left(\frac{\delta}{4k}{n\choose 2}^{\frac{1}{2}}\right)^{n-2l},$$ and hence, $$\frac{\Delta(\mu)}{\Delta(\mu_{>k})}=\prod\limits_{l=1}^{ k}\frac{\Delta(\mu_{>l-1})}{\Delta(\mu_{>l})}\geq \prod\limits_{l=1}^{ k}\left(\frac{\delta}{4k}{n\choose 2}^{\frac{1}{2}}\right)^{n-2l}=\left(\frac{\delta}{4k}{n\choose 2}^{\frac{1}{2}}\right)^{nk-k(k+1)}. $$
By \eqref{2}, we have $${n\choose 2}^{-1}\sum\limits_{j=1}^{k}\mu_j^2\leq\sum\limits_{j=1}^{k}(x_j+\delta)^2\leq \sum\limits_{j=1}^{k}x_j^2+\epsilon/2\leq \sum\limits_{j=1}^{+\infty}x_j^2+\epsilon/2= x_0-a+\epsilon/2.$$ Therefore,  for $n$ large enough, we can further estimate \eqref{non} as \begin{align*}& Z_n\mathbb{P}(d(\gamma_n,x)<\epsilon)\geq \int_{\cap_{j=1}^{k}\{x_j+\delta_{2j}<{n\choose 2}^{-\frac{1}{2}}\mu_j<x_j+\delta_{2j-1}\}}\int_{\Sigma_{m,a+\epsilon/4,a+\epsilon/2,(\delta/4)^2}}
\\ &\left(\frac{\delta}{4k}{n\choose 2}^{\frac{1}{2}}\right)^{2(nk-k(k+1))}|\Delta(\mu_{>k})|^2e^{-\sum\limits_{j=k+1}^{n/2}\mu_j^2/2-{n\choose 2}(x_0-a+\epsilon/2)/2}d\mu\\ &=\left(\left(\frac{\delta}{4k}\right)^{2}{n\choose 2}\right)^{nk-k(k+1)}\left(\prod_{j=1}^k(\delta_{2j-1}-\delta_{2j}){n\choose 2}^{\frac{1}{2}}\right)e^{-{n\choose 2}(x_0-a+\epsilon/2)/2}\\ &\times \int_{\Sigma_{m,a+\epsilon/4,a+\epsilon/2,(\delta/4)^2}}|\Delta(\mu_{>k})|^2
e^{-\sum\limits_{j=k+1}^{n/2}\mu_j^2/2}d\mu_{>k}\\ &=\left[\left(\left(\frac{\delta}{4k}\right)^{2}{n\choose 2}\right)^{nk-k(k+1/2)}e^{-(x_0-a+\epsilon/2)\frac{n(n-1)}{4}}Z_m\right] \\ &\times \left[Z_m^{-1}\int_{\Sigma_{m,a+\epsilon/4,a+\epsilon/2,(\delta/4)^2}}|\Delta(\mu_{>k})|^2
e^{-\sum\limits_{j=k+1}^{n/2}\mu_j^2/2}d\mu_{>k}\right].
\end{align*}Here, we used the fact that $(\delta_{2j-1}-\delta_{2j}){n\choose 2}^{\frac{1}{2}}=\frac{\delta}{4k}{n\choose 2}^{\frac{1}{2}}. $ Since $m=n-2k,\ Z_n=(\pi/2)^{\frac{n}{4}}\prod\limits_{j=0}^{ n/2-1}(2j)!,$ we have\begin{align*} {Z_n}/Z_m= \prod\limits_{j=1}^{ k}(\pi/2)^{\frac{1}{2}}(n-2j)!\leq\left((\pi/2)^{\frac{1}{2}}n!\right)^k.
\end{align*}Using $n!\leq n^n$ and ${n\choose 2}\geq n>0, $ we have ${Z_n}/Z_m \leq (\pi/2)^{\frac{k}{2}}n^{kn} $, thus \begin{align*}&\left(\left(\frac{\delta}{4k}\right)^{2}{n\choose 2}\right)^{nk-k(k+1/2)}Z_m/Z_n\\ \geq& \left(\left(\frac{\delta}{4k}\right)^{2}{n\choose 2}\right)^{nk-k(k+1/2)}(\pi/2)^{-\frac{k}{2}}n^{-kn}\\ \geq& \left(\frac{\delta}{4k}\right)^{2nk-2k(k+1/2)}{n\choose 2}^{-k(k+1/2)}(\pi/2)^{-\frac{k}{2}}.
\end{align*}Therefore, we have \begin{align*} \mathbb{P}(d(\gamma_n,x)<\epsilon)&\geq \left(\frac{\delta}{4k}\right)^{2nk-2k(k+1/2)}{n\choose 2}^{-k(k+1/2)}(\pi/2)^{-\frac{k}{2}}e^{-(x_0-a+\epsilon/2)\frac{n(n-1)}{4}}\\ & \times Z_m^{-1}\int_{\Sigma_{m,a+\epsilon/4,a+\epsilon/2,(\delta/4)^2}}|\Delta(\mu_{>k})|^2
e^{-\sum\limits_{j=k+1}^{n/2}\mu_j^2/2}d\mu_{>k}.
\end{align*} Hence,  we have \begin{align*}&\liminf_{n\to+\infty}\frac{4}{n^2}\ln \mathbb{P}(d(\gamma_n,x)<\epsilon)\geq -(x_0-a+\epsilon/2)\\&+\liminf_{m\to+\infty}\frac{4}{m^2}\ln \int_{\Sigma_{m,a+\epsilon/4,a+\epsilon/2,(\delta/4)^2}}Z_m^{-1}|\Delta(\mu_{>k})|^2
e^{-\sum\limits_{j=k+1}^{n/2}\mu_j^2/2}d\mu_{>k}.
\end{align*}If $a\geq 1,$ by part (c) of Lemma \ref{lemma7}, we have
  \begin{align*}&\liminf_{n\to+\infty}\frac{4}{n^2}\ln \mathbb{P}(d(\gamma_n,x)<\epsilon)\\ \geq & -(x_0-a+\epsilon/2)+1-(a+\epsilon/4)+\ln(a+\epsilon/4).
\end{align*}Since for $ \epsilon'\in(0,\epsilon)$, we have $\mathbb{P}(d(\gamma_n,x)<\epsilon)\geq \mathbb{P}(d(\gamma_n,x)<\epsilon')$, thus \begin{align*}&\liminf_{n\to+\infty}\frac{4}{n^2}\ln \mathbb{P}(d(\gamma_n,x)<\epsilon)\geq\liminf_{n\to+\infty}\frac{4}{n^2}\ln \mathbb{P}(d(\gamma_n,x)<\epsilon')\\ \geq & -(x_0-a+\epsilon'/2)+1-(a+\epsilon'/4)+\ln(a+\epsilon'/4),
\end{align*}letting $\epsilon'\to0+ $, we obtain\begin{align*}&\liminf_{n\to+\infty}\frac{4}{n^2}\ln \mathbb{P}(d(\gamma_n,x)<\epsilon)\\ \geq & -(x_0-a)+1-a+\ln a=-x_0+1+\ln a.
\end{align*}Similarly, if $ 0\leq a<1$, then for $0<\epsilon<1-a$, we have $0<a+\epsilon/4<a+\epsilon/2<1$. Now by Lemma \ref{lemma7} again, we have \begin{align*}&\liminf_{n\to+\infty}\frac{4}{n^2}\ln \mathbb{P}(d(\gamma_n,x)<\epsilon)\geq  -(x_0-a+\epsilon/2)\\&+\liminf_{m\to+\infty}\frac{4}{m^2}\ln \int_{\Sigma_{m,a+\epsilon/4,a+\epsilon/2,(\delta/4)^2}}Z_m^{-1}|\Delta(\mu_{>k})|^2
e^{-\sum\limits_{j=k+1}^{n/2}\mu_j^2/2}d\mu_{>k}\\ \geq&-(x_0-a+\epsilon/2)+1-(a+\epsilon/2)+\ln(a+\epsilon/2).
\end{align*}If $0<\epsilon<1,\ 0<\epsilon'<\min(1-a,\epsilon)$, then\begin{align*}&\liminf_{n\to+\infty}\frac{4}{n^2}\ln \mathbb{P}(d(\gamma_n,x)<\epsilon)\geq\liminf_{n\to+\infty}\frac{4}{n^2}\ln \mathbb{P}(d(\gamma_n,x)<\epsilon')\\ \geq & -(x_0-a+\epsilon'/2)+1-(a+\epsilon'/2)+\ln(a+\epsilon'/2),
\end{align*}letting $\epsilon'\to0+ $, we obtain\begin{align*}&\liminf_{n\to+\infty}\frac{4}{n^2}\ln \mathbb{P}(d(\gamma_n,x)<\epsilon) \geq-x_0+1+\ln a.
\end{align*}Therefore, for $x=(x_j)_{j=0}^{\infty}\in X,\ a=x_0-\sum\limits_{j=1}^{+\infty}x_j^2,\ \epsilon\in(0,1) $, we always have the lower bound\begin{align}\label{3}&\liminf_{n\to+\infty}\frac{4}{n^2}\ln \mathbb{P}(d(\gamma_n,x)<\epsilon) \geq-x_0+1+\ln a,
\end{align}in the sense that $\ln 0=-\infty.$  Recall  the definition of $a$ in Lemma \ref{claim1},  we have$$a=J(x)=x_0-\sum\limits_{j=1}^{+\infty}x_j^2\geq 0.$$ This implies the lower bound in Lemma \ref{lems} if we define $I(x):=x_0-1-\ln J(x)$.

Now we consider the upper bound. For $A,B\in\mathbb{R},\ k\in\mathbb{Z},\ k>0,$ let's define $$G(x)=(A-B)\sum\limits_{j=1}^{k}x_j^2+Bx_0,\,\,\, x=(x_j)_{j=0}^{\infty}\in X.$$  Then $G$ is continuous in $X$ and $$G(x)=A\sum\limits_{j=1}^{k}x_j^2+B\sum\limits_{j=k+1}^{+\infty}x_j^2 \,\,\,\mbox{if}\,\,\, x\in X_0.$$ Now for every $\delta>0,$ there exists $ \epsilon\in(0,1)$ depending only on $x, A,B,k,\delta$ such that $ G(y)>G(x)-\delta$ for $y\in X, $ $d(x,y)<\epsilon$. By definition of $\gamma_n\in X_0$, we further have \begin{align*}&G(\gamma_n)={n\choose 2}^{-1}A\sum\limits_{j=1}^{k}\mu_j^2+{n\choose 2}^{-1}B\sum\limits_{j=k+1}^{n/2}\mu_j^2.
\end{align*}If $A, B<1/2,$ by \eqref{12} in Lemma \ref{lemma5}, we have\begin{align*}& \mathbb{P}(d(\gamma_n,x)<\epsilon)\leq \mathbb{P}(G(\gamma_n)>G(x)-\delta)\leq e^{-{n\choose 2}(G(x)-\delta)}\mathbb{E}e^{{n\choose 2}G(\gamma_n)}\\&=e^{-{n\choose 2}(G(x)-\delta)}\mathbb Ee^{A\sum\limits_{j=1}^k\mu_j^2+B\sum\limits_{j=k+1}^{n/2}\mu_j^2}\\&\leq e^{-{n\choose 2}(G(x)-\delta)} 2^{nk}(1-2A)^{-k(n-k-\frac{1}{2})}(1-2B)^{-(\frac{n}{2}-k)(\frac{n-1}{2}-k)},
\end{align*}which implies\begin{align*}&\limsup_{n\to+\infty}\frac{4}{n^2}\ln \mathbb{P}(d(\gamma_n,x)<\epsilon)\leq  -2(G(x)-\delta)-\ln(1-2B)\\&=-2(A-B)\sum\limits_{j=1}^{k}x_j^2-2Bx_0+2\delta-\ln(1-2B).
\end{align*}As $\limsup\limits_{n\to+\infty}\dfrac{4}{n^2}\ln \mathbb{P}(d(\gamma_n,x)<\epsilon) $ is an increasing function of $ \epsilon,$ for every $A<1/2,\ B<1/2,\ \delta>0,\ k\in\mathbb{Z},\ k>0,$ we have\begin{align*}&\lim_{\epsilon\to 0+}\limsup_{n\to+\infty}\frac{4}{n^2}\ln \mathbb{P}(d(\gamma_n,x)<\epsilon)\\ \leq&-2(A-B)\sum\limits_{j=1}^{k}x_j^2-2Bx_0+2\delta-\ln(1-2B).
\end{align*}Letting $A\to (1/2)-,\ k\to+\infty,\ \delta\to 0+$, we have\begin{align*}&\lim_{\epsilon\to 0+}\limsup_{n\to+\infty}\frac{4}{n^2}\ln \mathbb{P}(d(\gamma_n,x)<\epsilon)\\ \leq&-(1-2B)\sum\limits_{j=1}^{+\infty}x_j^2-2Bx_0-\ln(1-2B)\\=&(1-2B)a-x_0-\ln(1-2B).
\end{align*}
Therefore, we can further find the upper bound of the last line by choosing $B=(1-1/a)/2$ if $a>0$ and $B\to -\infty$ if $a=0$ (i.e., $x\in X_0$), and thus we have\begin{align}\label{4}&\lim_{\epsilon\to 0+}\limsup_{n\to+\infty}\frac{4}{n^2}\ln \mathbb{P}(d(\gamma_n,x)<\epsilon) \leq1-x_0+\ln a,
\end{align} which finishes the upper bound in Lemma \ref{lems}.





\subsubsection{Compactness}\label{cpm}
Let's recall that the rate function $I(x)$ is  good if its level sets $\{x|I(x)\leq t\} $ are compact. 
 We first give the following compactness criterion,
 
 \begin{lem}\label{claim2} The level sets $A_t:=\{x=(x_j)_{j=0}^{\infty}\in X|x_0\leq t\} $ are compact.\end{lem}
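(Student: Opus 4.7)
The plan is to prove sequential compactness of $A_t$ directly. Let $(x^{(n)})_{n\geq 1}\subset A_t$ with $x^{(n)}=(x_j^{(n)})_{j=0}^{\infty}$, and I aim to extract a subsequence converging in $(X,d)$ to some element of $A_t$.

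First I would establish a uniform coordinate bound. Since $x^{(n)}\in A_t$ gives $x_0^{(n)}\leq t$ and the positive coordinates are non-increasing, for each $j\geq 1$ we have
\begin{equation*}
j\,(x_j^{(n)})^2\leq \sum_{i=1}^{j}(x_i^{(n)})^2\leq x_0^{(n)}\leq t,
\end{equation*}
so $0\leq x_j^{(n)}\leq \sqrt{t/j}$ for $j\geq 1$, while $0\leq x_0^{(n)}\leq t$. Thus every coordinate of every $x^{(n)}$ lies in a fixed compact interval. A standard Cantor diagonal argument then produces a subsequence (still denoted $(x^{(n)})$) that converges coordinate-wise to some $x=(x_j)_{j=0}^{\infty}$.

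Next I would check $x\in A_t$. Monotonicity $x_j\geq x_{j+1}\geq 0$ for $j\geq 1$ and $x_0\leq t$ pass to the pointwise limit. For the defining inequality of $X$, Fatou's lemma applied to counting measure gives
\begin{equation*}
\sum_{j=1}^{+\infty}x_j^2=\sum_{j=1}^{+\infty}\lim_{n\to\infty}(x_j^{(n)})^2\leq \liminf_{n\to\infty}\sum_{j=1}^{+\infty}(x_j^{(n)})^2\leq \liminf_{n\to\infty}x_0^{(n)}=x_0,
\end{equation*}
so $x\in X$ and hence $x\in A_t$.

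Finally I would upgrade coordinate-wise convergence to convergence in the sup metric. Given $\varepsilon>0$, choose an integer $J$ with $\sqrt{t/J}<\varepsilon/2$. For $j\geq J$, the uniform bound above gives $x_j^{(n)}\leq \varepsilon/2$, and the same reasoning applied to $x\in A_t$ yields $x_j\leq \varepsilon/2$, so $|x_j^{(n)}-x_j|\leq \varepsilon$ for every $n$. For the finitely many coordinates $0\leq j<J$, pointwise convergence gives $\max_{0\leq j<J}|x_j^{(n)}-x_j|<\varepsilon$ for all sufficiently large $n$. Hence $d(x^{(n)},x)=\sup_{j\geq 0}|x_j^{(n)}-x_j|<\varepsilon$ for $n$ large, which completes the proof. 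The only non-routine point is this last step: it is precisely the uniform tail decay $x_j\leq \sqrt{t/j}$ (a consequence of $x_0\leq t$, not of general membership in $X$) that reduces sup convergence to convergence of finitely many coordinates, and this is why the level sets of $I$, not merely bounded subsets of $X$, are the appropriate compact sets.
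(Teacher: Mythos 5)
Your proof is correct and follows essentially the same route as the paper: coordinate-wise subsequence extraction using boundedness of each coordinate, Fatou's lemma to place the limit in $A_t$, and a uniform tail estimate to upgrade pointwise to sup-metric convergence. The only variation is in the tail step: you use the explicit bound $x_j^{(n)}\leq\sqrt{t/j}$, uniform in $n$, whereas the paper bounds $\sup_{j\geq l+1}|x_j^{(k)}-x_j^{(0)}|\leq x_{l+1}^{(0)}+|x_{l+1}^{(k)}-x_{l+1}^{(0)}|$ and takes $k\to\infty$ then $l\to\infty$, relying on $x_{l+1}^{(0)}\to 0$; your quantitative version of the same idea is a bit more direct.
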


\begin{proof}Since the function $F(x)=x_0$ is continuous in $X,$ the level sets $A_t=\{x=(x_j)_{j=0}^{\infty}\in X|x_0\leq t\}$ are closed and $A_t=\emptyset$ for $t<0.$ If $t\geq 0,$ given a sequence $\{x^k=(x_j^k)_{j=0}^{\infty}\}\subset A_t,$ we have $0\leq x_0^k\leq t,$ $(x_1^k)^2\leq \sum\limits_{j=1}^{+\infty}(x_j^k)^2\leq x_0\leq t$ and $0\leq x_j^k\leq x_1^k\leq t^{\frac{1}{2}} $ for $j\geq 1.$ Now we can find a subsequence $\{x^{(k)}=(x_j^{(k)})_{j=0}^{\infty}\}\subset A_t$ and $x^{(0)}=(x_j^{(0)})_{j=0}^{\infty} $ such that $\lim\limits_{k\to+\infty}x_j^{(k)}= x_j^{(0)}$ for $j\geq 0.$ By Fatou's lemma and the definitions of $X$ and $A_t$, we have $x^{(0)}\in A_t $ and $\lim\limits_{j\to+\infty}x_j^{(0)}=0.$ Now for $k,l\in\mathbb{Z},\ k,l\geq 0$, we have $\sup\limits_{j\geq l+1}|x_j^{(k)}-x_j^{(0)}|\leq \sup\limits_{j\geq l+1}\max(x_j^{(k)},x_j^{(0)})\leq \max(x_{l+1}^{(k)},x_{l+1}^{(0)})\leq x_{l+1}^{(0)}+|x_{l+1}^{(k)}-x_{l+1}^{(0)}|;$ for $0\leq j\leq l$, we have $|x_j^{(k)}-x_j^{(0)}|\leq  x_{l+1}^{(0)}+|x_{j}^{(k)}-x_{j}^{(0)}|.$ Thus $ d(x^{(k)},x^{(0)})=\sup\limits_{j\geq 0}|x_j^{(k)}-x_j^{(0)}|\leq x_{l+1}^{(0)}+\max\limits_{0\leq j\leq l+1}|x_{j}^{(k)}-x_{j}^{(0)}|$ and\begin{align*}0\leq&\limsup_{k\to+\infty}d(x^{(k)},x^{(0)})\leq x_{l+1}^{(0)}+\limsup_{k\to+\infty}\max\limits_{0\leq j\leq l+1}|x_{j}^{(k)}-x_{j}^{(0)}|\\=&x_{l+1}^{(0)}+\max\limits_{0\leq j\leq l+1}\limsup_{k\to+\infty}|x_{j}^{(k)}-x_{j}^{(0)}|=x_{l+1}^{(0)}.
\end{align*}Letting $l\to+\infty$, we have $\limsup\limits_{k\to+\infty}d(x^{(k)},x^{(0)})=0, $ which means $x^{(k)}\to x^{(0)} $ in $X$ and $A_t$ is compact. This completes the proof.\end{proof}

Since $I$ is lower semicontinuous, the level sets $\{x|I(x)\leq t\} $ are closed. For $x=(x_j)_{j=0}^{\infty}$,  we have $0\leq J(x)\leq x_0$, and thus $I(x)=x_0-1-\ln J(x)\geq x_0-1-\ln x_0=x_0/2+(x_0/2-1-\ln (x_0/2))-\ln 2\geq x_0/2-\ln 2$. Thus if $I(x)\leq t$, then $x_0\leq 2(t+\ln 2),$ which implies $\{x|I(x)\leq t\}\subseteq A_{2(t+\ln 2)}.$ By Lemma \ref{claim2}, $A_{2(t+\ln 2)}$ is compact, thus the level sets $\{x|I(x)\leq t\} $ are compact. Therefore, the rate function $I(x)$ is good.

\subsubsection{Exponential tightness}

We say that the sequence $Y_1,Y_2,\cdots$ is exponentially
tight if for any $E>0$, there exists a compact set $K_E\subset X$ such that \begin{align*}&\limsup_{n\to+\infty}\frac{1}{a_n}\ln \mathbb{P}(Y_n\not\in K_E) <-E.
\end{align*}
Regarding the exponentially tight measures, we have (see Appendix D in  \cite{AGZ}),
\begin{lem}\label{lem1}Let $(Y_n)_{n>0,n\in\mathbb{Z}}$ be a sequence of random variables taking values
in some Polish space $V$. Suppose that it is exponentially tight. If there exists
a lower semicontinuous function $I:V\to[0,+\infty]$, such that for all $x\in V$ the
following estimates of small ball probabilities hold\begin{align*}&\lim_{\epsilon\to 0+}\limsup_{n\to+\infty}\frac{1}{a_n}\ln \mathbb{P}(Y_n\in B(x,\epsilon)) \leq-I(x),\\ &\lim_{\epsilon\to 0+}\liminf_{n\to+\infty}\frac{1}{a_n}\ln \mathbb{P}(Y_n\in B(x,\epsilon)) \geq-I(x).
\end{align*}Then $(Y_n)_{n>0,n\in\mathbb{Z}}$ satisfies LDP with rate function $I(x)$.\end{lem}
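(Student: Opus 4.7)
The plan is to deduce the full LDP in the Polish space $(V,d)$ from three ingredients: exponential tightness, the given small-ball estimates, and lower semicontinuity of $I$. I would separately establish the lower bound on open sets and the upper bound on closed sets, these together being equivalent to the full LDP with rate function $I$.

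For the lower bound on an open set $G \subseteq V$: pick any $x \in G$; openness gives some $\epsilon_0 > 0$ with $B(x,\epsilon_0) \subseteq G$. For every $\epsilon \in (0,\epsilon_0)$ we have $\mathbb{P}(Y_n \in G) \geq \mathbb{P}(Y_n \in B(x,\epsilon))$, so the hypothesized lower bound gives
$$\liminf_{n\to+\infty} \frac{1}{a_n} \ln \mathbb{P}(Y_n \in G) \geq \lim_{\epsilon \to 0+}\liminf_{n\to+\infty} \frac{1}{a_n} \ln \mathbb{P}(Y_n \in B(x,\epsilon)) \geq -I(x).$$
Taking the supremum over $x \in G$ yields $\liminf_n a_n^{-1} \ln \mathbb{P}(Y_n \in G) \geq -\inf_{G} I$.

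For the upper bound I would first treat the compact case and then reduce the closed case to it via exponential tightness. Fix $K \subseteq V$ compact and $\delta > 0$. For each $x \in K$, the hypothesized small-ball upper bound produces some $\epsilon_x > 0$ with
$$\limsup_{n\to+\infty} \frac{1}{a_n} \ln \mathbb{P}(Y_n \in B(x,\epsilon_x)) \leq -\min(I(x), 1/\delta) + \delta;$$
the truncation by $1/\delta$ accommodates the possibility $I(x) = +\infty$. By compactness extract a finite subcover $K \subseteq \bigcup_{i=1}^{N} B(x_i,\epsilon_{x_i})$. Using the elementary identity $\limsup_n a_n^{-1} \ln \sum_{i=1}^N p_n^{(i)} = \max_i \limsup_n a_n^{-1} \ln p_n^{(i)}$ for nonnegative sequences together with the union bound, one gets $\limsup_n a_n^{-1} \ln \mathbb{P}(Y_n \in K) \leq -\min(\inf_K I, 1/\delta) + \delta$; letting $\delta \to 0+$ gives the compact upper bound $\limsup_n a_n^{-1} \ln \mathbb{P}(Y_n \in K) \leq -\inf_K I$.

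Finally, for a general closed $F \subseteq V$ and arbitrary $E > 0$, exponential tightness provides a compact set $K_E$ with $\limsup_n a_n^{-1}\ln \mathbb{P}(Y_n \notin K_E) < -E$. Splitting
$$\mathbb{P}(Y_n \in F) \leq \mathbb{P}(Y_n \in F \cap K_E) + \mathbb{P}(Y_n \notin K_E)$$
and applying the compact upper bound to the compact set $F \cap K_E$ yields
$$\limsup_{n\to+\infty} \frac{1}{a_n} \ln \mathbb{P}(Y_n \in F) \leq \max\bigl(-\inf_{F \cap K_E} I,\, -E\bigr) \leq \max\bigl(-\inf_F I,\, -E\bigr);$$
sending $E \to +\infty$ gives $\limsup_n a_n^{-1} \ln \mathbb{P}(Y_n \in F) \leq -\inf_F I$, completing the LDP. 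The main delicate step is the finite-subcover estimate on compacts, where the truncation $\min(I(x),1/\delta)$ is essential to handle points where $I$ is infinite; everything else is bookkeeping with $\limsup$ and $\liminf$.
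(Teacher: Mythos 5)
Your proof is correct, but there is nothing in the paper to compare it against: the authors state this lemma as a citation (``see Appendix D in [AGZ]'') and give no proof of their own. What you have written is the standard textbook argument, essentially the proof of Theorem~4.1.11 together with Lemma~1.2.18 in Anderson--Guionnet--Zeitouni: the small-ball lower bound plus openness gives the lower bound on open sets directly; the small-ball upper bound plus a finite subcover and the elementary fact $\limsup_n a_n^{-1}\ln\sum_{i=1}^N p_n^{(i)}=\max_i\limsup_n a_n^{-1}\ln p_n^{(i)}$ gives the weak upper bound on compacts (your truncation $\min(I(x),1/\delta)$ is exactly the right device to handle $I(x)=+\infty$ and to extract a finite $\epsilon_x$ from the monotone limit as $\epsilon\to 0+$); and exponential tightness then upgrades the compact upper bound to closed sets. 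The only ingredient you list but never visibly invoke is the lower semicontinuity of $I$; that hypothesis is not needed in the two bound arguments themselves, but only so that $I$ qualifies as a rate function in the definition of LDP, and it is worth saying so explicitly to avoid the impression of an unused assumption.
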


By Lemma \ref{lem1} and the results in \S \ref{boud} and \S \ref{cpm}, Proposition \ref{ldpp} follows once we prove that the sequence of random variables $(\gamma_n)_{n>0,n\in 2\mathbb Z}$ is exponentially tight.  




Since the function $F(x)=x_0$ is continuous in $X$ and $F(\gamma_n)=(\gamma_n)_0={n\choose 2}^{-1}\sum\limits_{j=1}^{n/2}\mu_j^2 $,
 taking $k=0,\ b=1/4$ in \eqref{12}, we have  \begin{align*}& \mathbb{P}(\gamma_n\not\in A_t)=\mathbb{P}(F(\gamma_n)>t)\leq e^{-{n\choose 2}t/4}\mathbb Ee^{{n\choose 2}F(\gamma_n)/4}\\=&e^{-{n\choose 2}t/4}\mathbb Ee^{\sum\limits_{j=1}^{n/2}\mu_j^2/4} \leq e^{-{n\choose 2}t/4}(1-2/4)^{-\frac{n}{2}\frac{n-1}{2}}.
\end{align*}Then \begin{align*}&\limsup_{n\to+\infty}\frac{4}{n^2}\ln \mathbb{P}(\gamma_n\not\in A_t) \leq -t/2-\ln(1-2/4)=-t/2+\ln 2.
\end{align*}For any $E>0$, let's choose $t=2(E+1)>0$ and $K_E:=A_t\subset X$, then\begin{align*}&\limsup_{n\to+\infty}\frac{4}{n^2}\ln \mathbb{P}(\gamma_n\not\in K_E)\leq -t/2+\ln 2<-t/2+1=-E.
\end{align*}By Lemma \ref{claim2}, $K_E$ is compact, and thus $(\gamma_n)_{n>0,n\in\mathbb Z}$ is exponentially tight. This will complete  the proof of Proposition \ref{ldpp}.
\subsection{Proof of Theorem \ref{ldp2}}\label{ssss}
Now we are ready to prove Theorem \ref{ldp2}.  

As explained in \S\ref{ldppi}, let's define the map $ \varphi:X\to M_1(\mathbb{R})$ via its Fourier transform
   \eqref{definf}, then by definition of $\gamma_n$, we must have $ \varphi(\gamma_n)=\rho_n$. There are three more properties we need to prove.  First,   $\varphi(x) $ is a Borel probability
measure. In fact, $\varphi(x) $ is the density of the random variable $Y=a_0+\sum\limits_{j=1}^{+\infty}x_j a_j,$ where $(a_j)_{j=0}^{\infty}$ are independent random variables such that $\mathbb{P}(a_j=1)=\mathbb{P}(a_j=-1)=1/2 $ for $j>0,$ and $a_0$ is a Gaussian random variable with mean $0$ and variance $J(x).$

Secondly, the map $ \varphi$ is continuous. To show this, 
we need the following fundamental lemma which indicates that the pointwise convergence of the Fourier transform convergence implies the convergence in $(M_1(\mathbb{R}), d_{BL})$ \cite{AGZ, Br}, 

\begin{lem}\label{lem2} If $ \mu_n,\mu\in M_1(\mathbb{R}),\ \lim\limits_{n\to+\infty}\widehat{\mu_n}(s)=\widehat{\mu}(s)$ for every $s\in \mathbb{R},$ then $\mu_n\to\mu$ in $ M_1(\mathbb{R}),\ i.e.\lim\limits_{n\to+\infty}d_{BL}(\mu_n,\mu)=0. $\end{lem}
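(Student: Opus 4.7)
The plan is to combine L\'evy's continuity theorem---which upgrades pointwise convergence of Fourier transforms to weak convergence of probability measures---with the standard equivalence between weak convergence on $\R$ and convergence in the bounded Lipschitz metric $d_{BL}$. The argument decomposes into three steps: tightness of $\{\mu_n\}$, identification of the weak limit via uniqueness of the Fourier transform, and an Arzel\`a--Ascoli compactification of the test class $\mathcal{F}$ appearing in the definition of $d_{BL}$.

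For tightness, I would exploit the elementary identity
\begin{equation*}
\frac{1}{\delta}\int_{-\delta}^{\delta}\bigl(1-\mathrm{Re}\,\widehat{\mu_n}(s)\bigr)\,ds = 2\int\left(1-\frac{\sin(\delta x)}{\delta x}\right)d\mu_n(x) \geq \mu_n\bigl(\{|x|>2/\delta\}\bigr),
\end{equation*}
valid for every $\delta>0$ (the final inequality uses $1-\sin t/t\geq 1/2$ for $|t|\geq 2$). Since the integrand is bounded by $2$, dominated convergence combined with the pointwise hypothesis lets me pass $n\to\infty$ on the left-hand side to obtain $\delta^{-1}\int_{-\delta}^{\delta}(1-\mathrm{Re}\,\widehat{\mu}(s))\,ds$, which itself tends to $0$ as $\delta\to 0^+$ by continuity of $\widehat{\mu}$ at the origin together with $\widehat{\mu}(0)=1$. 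Hence $\{\mu_n\}$ is tight. By Prokhorov's theorem, every subsequence has a further subsequence converging weakly to some $\nu\in M_1(\R)$; applying weak convergence to the bounded continuous functions $x\mapsto e^{isx}$ yields $\widehat{\nu}=\widehat{\mu}$, so the uniqueness of the Fourier transform on $M_1(\R)$ forces $\nu=\mu$, and therefore $\mu_n\to\mu$ in the weak topology.

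To convert weak convergence into $d_{BL}$ convergence, I would invoke Arzel\`a--Ascoli. The test class $\mathcal{F}$ of $1$-Lipschitz functions with $|f|\leq 1$ is uniformly bounded and equicontinuous on $\R$. Given $\varepsilon>0$, tightness supplies $R>0$ with $\mu([-R,R]^c)+\sup_n\mu_n([-R,R]^c)<\varepsilon$. The restrictions $\{f|_{[-R,R]}:f\in\mathcal{F}\}$ are relatively compact in the sup norm, so a finite $\varepsilon$-net $f_1,\ldots,f_N\in\mathcal{F}$ suffices. Weak convergence gives $|\langle \mu_n,f_i\rangle-\langle \mu,f_i\rangle|<\varepsilon$ for all $n$ large and all $i$, while the tail truncation outside $[-R,R]$ and the sup-norm approximation on $[-R,R]$ each contribute at most $O(\varepsilon)$ uniformly over $\mathcal{F}$. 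Taking the supremum over $\mathcal{F}$ produces $d_{BL}(\mu_n,\mu)=O(\varepsilon)$.

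The main difficulty is the tightness step, where the interplay between the pointwise Fourier convergence and the regularity of $\widehat{\mu}$ at the origin must be controlled; the remaining compactification and subsequence-uniqueness arguments are routine.
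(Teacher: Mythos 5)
Your proof is correct and essentially complete. Note first that the paper does not actually prove this lemma: it cites it as a standard consequence of L\'evy's continuity theorem (referencing \cite{AGZ, Br}). You have supplied the argument the paper delegates to the references, and the route you take is the canonical one.

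Your three steps all check out. The tightness identity
\begin{equation*}
\frac{1}{\delta}\int_{-\delta}^{\delta}\bigl(1-\mathrm{Re}\,\widehat{\mu_n}(s)\bigr)\,ds
 = 2\int\Bigl(1-\frac{\sin(\delta x)}{\delta x}\Bigr)\,d\mu_n(x)
\end{equation*}
follows from Fubini, and the estimate $1-\sin t/t\geq 1/2$ for $|t|\geq 2$ is valid since $|\sin t/t|\leq 1/|t|\leq 1/2$ there. Passing $n\to\infty$ by bounded convergence (the integrand is bounded by $2$) and then $\delta\to 0^+$ by continuity of $\widehat\mu$ at $0$ with $\widehat\mu(0)=1$ gives tightness of the tail; for finitely many initial indices tightness is automatic. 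Prokhorov plus uniqueness of the Fourier transform identifies every subsequential weak limit as $\mu$, so $\mu_n\Rightarrow\mu$. The conversion to $d_{BL}$-convergence via Arzel\`a--Ascoli is sound: the class of $1$-Lipschitz functions bounded by $1$ is uniformly bounded and equicontinuous, restrictions to $[-R,R]$ admit a finite sup-norm $\varepsilon$-net, tightness controls the tails uniformly over the class, and the finitely many net functions are handled by weak convergence. The only cosmetic point is bookkeeping of constants: you should choose $R$ so that $\mu([-R,R]^c)<\varepsilon$ and $\sup_n\mu_n([-R,R]^c)<\varepsilon$ separately, which makes the final bound a fixed multiple of $\varepsilon$ rather than $\varepsilon$ itself; this is of course harmless.

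A remark on alternatives: one could instead prove weak convergence $\Rightarrow$ $d_{BL}$ convergence on $\R$ via the Skorokhod representation theorem (realize $\mu_n\to\mu$ as almost sure convergence of random variables, then $d_{BL}(\mu_n,\mu)\leq \mathbb E\min(|X_n-X|,2)\to 0$ by dominated convergence), which avoids the equicontinuity/net argument. Both routes are standard; yours has the virtue of being elementary and self-contained.
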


Now, for $x=(x_j)_{j=0}^{\infty}\in X,\ J(x)=x_0- \sum_{j=1}^{+\infty}x_j^2,$ we have\begin{align*}&\widehat{\varphi(x)}(s)=e^{-J(x)s^2/2}\prod_{j=1}^{+\infty}\cos sx_j=e^{-x_0s^2/2}\prod_{j=1}^{+\infty}(e^{(sx_j)^2/2}\cos sx_j).
\end{align*}For $x^k=(x_j^k)_{j=0}^{\infty}\in X$ such that $x^k\to x^0$ in $X$, we have $\lim\limits_{k\to+\infty}x_j^k= x_j^0$ for every fixed $j\geq 0$ and \begin{align}\label{5}\lim\limits_{k\to+\infty}e^{-x_0^ks^2/2}= e^{-x_0^0s^2/2},\ \lim\limits_{k\to+\infty}e^{(sx_j^k)^2/2}\cos sx_j^k= e^{(sx_j^0)^2/2}\cos sx_j^0.\end{align} Since $\lim\limits_{t\to0}t^{-2}\ln\cos t=-1/2, $ then for every $ \delta>0$, there exists $ \epsilon\in(0,1)$ such that $|t^2/2+\ln\cos t|\leq t^2\delta$ for $|t|<\epsilon.$ Since $\lim\limits_{j\to+\infty}x_j^0=0,$ for every fixed $s\in\mathbb{R}$, there exists $l>0$ such that $|sx_l^0|<\epsilon,$ then there exists $k_0>0$ such that $|sx_l^k|<\epsilon,\ x_0^k<x_0^0+1$ for $k>k_0,$ thus $|sx_j^k|\leq |sx_l^k|<\epsilon$ for $ j\geq l,\ k>k_0, $ and for $k>k_0,\ l'>l$ we have\begin{align*}&\left|\ln\prod_{j=l'}^{+\infty}(e^{(sx_j^k)^2/2}\cos sx_j^k)\right| \leq \sum_{j=l}^{+\infty}\left|(sx_j^k)^2/2+\ln\cos sx_j^k\right|\leq \sum_{j=l}^{+\infty}(sx_j^k)^2\delta\\ &\leq s^2\delta\sum_{j=1}^{+\infty}(x_j^k)^2\leq s^2\delta x_0^k\leq s^2\delta (x_0^0+1),
\end{align*}which implies the uniform convergence of the infinite product\begin{align*}&e^{-x_0^ks^2/2}\prod_{j=1}^{+\infty}(e^{(sx_j^k)^2/2}\cos sx_j^k),\ \ k\geq 0.
\end{align*}By \eqref{5}, we have\begin{align*}&\lim_{k\to+\infty}e^{-x_0^ks^2/2}\prod_{j=1}^{+\infty}(e^{(sx_j^k)^2/2}\cos sx_j^k)=e^{-x_0^0s^2/2}\prod_{j=1}^{+\infty}(e^{(sx_j^0)^2/2}\cos sx_j^0),
\end{align*}this gives $\lim\limits_{k\to+\infty}\widehat{\varphi(x^k)}(s)=\widehat{\varphi(x^0)}(s) $ for every fixed $s\in \mathbb{R}.$ Therefore, by Lemma \ref{lem2}, we conclude the continuity of $ \varphi$.

In the end,  the map $ \varphi$ is injective. In fact, the second moment of the probability measure $\varphi(x)$ reads \begin{align*}&\langle{\varphi(x)},\lambda^2\rangle=J(x)+\sum_{j=1}^{+\infty}x_j^{2}=x_0,
\end{align*}thus $x_0$ can be determined uniquely by $\varphi(x). $ Now we prove that  $x_j$ can be determined inductively by $\varphi(x). $ Let $f_0(s)=\widehat{\varphi(x)}(s)$, then $x_1=\pi/(2\inf\limits\{t>0|f_0(t)=0\})$ and $x_1=0$ if $f_0(t)\neq 0$ for all $t\in\mathbb{R}.$ Once $f_{k-1}$ and $x_{k}$ are determined, let $f_k(s)=f_{k-1}(s)/\cos sx_k$ and extend $f_k$ to be a continuous function for $s\in\mathbb{R},$ then $x_{k+1}=\pi/(2\inf\limits\{t>0|f_k(t)=0\})$ and $x_{k+1}=0$ if $f_k(t)\neq 0$ for all $t\in\mathbb{R}.$ In this way, we can determine $x_j,\ j>0$ only using $f_0.$ Thus $ \varphi$ is injective.

Now we can give the LDP of $\rho_n$ by the following Contraction Principle \cite{AGZ},

\begin{lem}\label{lem3} Let $(Y_n)_{n>0,n\in\mathbb{Z}}$ be a sequence of random variables taking values
in some Polish space $X$. Let $ \varphi:X\to V$ be continuous and injective, $V$ is also a Polish space. If  $(Y_n)_{n>0,n\in\mathbb{Z}}$ satisfies LDP with speed $a_n$, going to
infinity with $n$, and rate function $I$ which is good, then $(\varphi(Y_n))_{n>0,n\in\mathbb{Z}}$ satisfies LDP with speed $a_n$ and good rate function $\widetilde{I}$ such that $\widetilde{I}(x)=I(\varphi^{-1} x)$ if $x\in\varphi(X)$ and $\widetilde{I}(x)=+\infty$ if $x\not\in\varphi(X).$\end{lem}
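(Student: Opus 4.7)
The plan is to verify the three defining properties of an LDP for $(\varphi(Y_n))$ directly from the LDP for $(Y_n)$, using continuity and injectivity of $\varphi$. The key observation is that continuity of $\varphi$ turns preimages of closed (resp.\ open) sets in $V$ into closed (resp.\ open) sets in $X$, which is exactly what is needed to transport the upper and lower LDP bounds across $\varphi$.

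First, I would check that $\widetilde{I}$ is a good rate function. Because $\varphi$ is injective, $\widetilde{I}(x)=I(\varphi^{-1}(x))$ is unambiguously defined on $\varphi(X)$, and by convention $\widetilde{I}\equiv+\infty$ on $V\setminus\varphi(X)$. For any $t\geq 0$, the sublevel set therefore satisfies
\begin{equation*}
\{x\in V:\widetilde{I}(x)\leq t\}=\varphi(\{y\in X:I(y)\leq t\}).
\end{equation*}
Since $I$ is good, $\{I\leq t\}$ is compact in $X$; its image under the continuous map $\varphi$ is compact in $V$. Thus $\widetilde{I}$ has compact sublevel sets, so it is good and lower semicontinuous.

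Next, I would establish the upper bound. For any closed $F\subseteq V$, the preimage $\varphi^{-1}(F)$ is closed in $X$ by continuity, so the LDP upper bound for $(Y_n)$ gives
\begin{align*}
\limsup_{n\to+\infty}\frac{1}{a_n}\ln\mathbb{P}(\varphi(Y_n)\in F)
&=\limsup_{n\to+\infty}\frac{1}{a_n}\ln\mathbb{P}(Y_n\in\varphi^{-1}(F))\\
&\leq -\inf_{y\in\varphi^{-1}(F)}I(y)
=-\inf_{x\in F}\widetilde{I}(x),
\end{align*}
where the last equality uses that points of $F\setminus\varphi(X)$ contribute $+\infty$ to the infimum. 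For the lower bound, given open $G\subseteq V$, the set $\varphi^{-1}(G)$ is open in $X$, and the LDP lower bound yields
\begin{align*}
\liminf_{n\to+\infty}\frac{1}{a_n}\ln\mathbb{P}(\varphi(Y_n)\in G)
&=\liminf_{n\to+\infty}\frac{1}{a_n}\ln\mathbb{P}(Y_n\in\varphi^{-1}(G))\\
&\geq -\inf_{y\in\varphi^{-1}(G)}I(y)
=-\inf_{x\in G}\widetilde{I}(x).
\end{align*}

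There is no substantive obstacle here beyond bookkeeping; the role of injectivity is simply to let one write $\widetilde{I}=I\circ\varphi^{-1}$ instead of the usual $\widetilde{I}(x)=\inf\{I(y):\varphi(y)=x\}$ appearing in the general contraction principle, and the convention $\widetilde{I}\equiv+\infty$ on $V\setminus\varphi(X)$ makes the infima over $F$ and $F\cap\varphi(X)$ (resp.\ $G$ and $G\cap\varphi(X)$) coincide. Once these identifications are in place, compactness of $\{\widetilde{I}\leq t\}$ and the two LDP bounds follow immediately from the continuity of $\varphi$ together with the hypothesis that $(Y_n)$ satisfies an LDP with good rate function $I$.
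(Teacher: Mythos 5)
Your proposal is correct and is essentially the standard textbook proof of the contraction principle, specialized to the injective case so that the rate function $\widetilde{I}=I\circ\varphi^{-1}$ on $\varphi(X)$ replaces the usual $\widetilde{I}(x)=\inf\{I(y):\varphi(y)=x\}$. Note, however, that the paper does not prove this lemma at all\,---\,it is quoted directly from \cite{AGZ} (Theorem D.7)\,---\,so there is no in-paper argument to compare against; you have simply supplied the proof that the cited reference provides, and your bookkeeping (compactness of $\varphi(\{I\leq t\})$ for goodness, preimages of closed/open sets for the upper/lower bounds, and the identification $\inf_F\widetilde{I}=\inf_{\varphi^{-1}(F)}I$) is all sound.
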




By Lemma \ref{lem3} and the continuity and injectivity of $\varphi$ we proved above, $(\rho_{n})_{n>0,n\in 2\mathbb{Z}}$ will satisfy the LDP in $( M_1(\mathbb R), d_{BL})$ with speed $a_n=n^2/4$ and good rate function $\widetilde{I}(x)$ such that $\widetilde{I}(x)=I(\varphi^{-1} x)$ if $x\in\varphi(X)$ and $\widetilde{I}(x)=+\infty$ if $x\not\in\varphi(X)$. This completes the proof of Theorem \ref{ldp2}.

\section{Concentration of measure theorem for $q_n\geq 3$}\label{cmp12}
Now we discuss the concentration of
measure theorem for  $\rho_n$ (defined in \eqref{emp}) of eigenvalues of the Gaussian SYK model for  general $q_n\geq 3$. 
 \subsection{Notations and basic properties}Let's first recall some notations and basic properties in \cite{FTD1} regarding the Majorana fermions.  For a set $A=\{i_1, i_2,\cdots, i_{m}\}$$  \subseteq\{1,2,\cdots,n\}$, $ 1\leq i_1<i_2< \cdots <i_m \leq n,$ we denote $$\Psi_A:=\psi_{i_1}\cdots \psi_{i_{m}} \,\,\,\mbox{and}\,\,\,\Psi_A:=I\,\,\mbox{if}\,\,A=\emptyset.$$ 
We will need the following properties,\\
  \textcircled 1 Given a set $A\subseteq\{1,2,\cdots,n\}$,   $$\mbox{$\Tr\Psi_A=0 $ and $\Psi_A\neq \pm I $ are always true for $A\neq \emptyset$}.$$ \textcircled 2 For $A,B\subseteq\{1,2,\cdots,n\}$, then  $$\mbox{$\Psi_A=\pm\Psi_B$ if and only if $A=B$.}$$
 \textcircled 3  $$\mbox{$\Psi_A\Psi_B=\pm\Psi_{A\triangle B}$  where $A\triangle B:=(A\setminus B)\cup (B\setminus A).$}$$ 

\subsection{Proof of Lemma \ref{lemlip}}

\begin{proof} Recall the notation \eqref{simple}, we may consider the random matrices $H$ as  functions $H(x)$ which 
maps $x:=(J_R)_{R\in I_n}\in \mathbb R^{{n\choose q_n}}$ to the space of  $L_n\times L_n$ Hermitian matrices which is equipped
with the Hilbert-Schmidt norm $$\|A\|_{H.S.}^2:=\Tr (AA^*)=\Tr (A^2).$$
For $x=(J_R)_{R\in I_n},x'=(J_R')_{R\in I_n}\in \R^{{n\choose q_n}}, $ let's write $H:=H(x)$ and $H':=H(x').$ Then\begin{align*}&\|H-H'\|_{H.S.}^2=\Tr[(H-H')^2]\\&=(-1)^{[q_n/2]}{n\choose q_n}^{-1}\sum_{R\in I_n}\sum_{R'\in I_n}(J_R-J_R')(J_{R'}-J_{R'}')\Tr(\Psi_R\Psi_{R'}).
\end{align*}By properties \textcircled 1 \textcircled 2 \textcircled 3 above, if $R\neq R' $, then $R\triangle R'\neq \emptyset$ and $\Tr(\Psi_R\Psi_{R'})=\pm\Tr(\Psi_{R\triangle R'})=0. $ If $R= R' $, then by the anticommutative property \eqref{anti},  we must have $\Psi_R\Psi_{R'}=\Psi_R^2=(-1)^{[q_n/2]}I$ and $\Tr(\Psi_R\Psi_{R'})=(-1)^{[q_n/2]}L_n.$ It follows that\begin{equation}\label{dds}\|H-H'\|_{H.S.}^2=L_n{n\choose q_n}^{-1}\sum_{R\in I_n}(J_R-J_R')^2=L_n{n\choose q_n}^{-1}\|x-x'\|^2,
\end{equation}and thus the map $x\mapsto H(x)$ is $L_n^{1/2}{n\choose q_n}^{-1/2} $-Lipschitz.

Now we consider the map $H\mapsto\langle f,\rho_n\rangle$ where $f$ is Lipschitz. Let $(\la_j)_{1\leq j\leq L_n}$ be eigenvalues of $H$ and $(\la_j')_{1\leq j\leq L_n}$ be eigenvalues of $H'$ such that $\la_j\geq \la_{j+1},\ \la_j'\geq \la_{j+1}'$ for $1\leq j<L_n.$ 
By definition $\langle f, \rho_n\rangle=L_n^{-1}\sum\limits_{j=1}^{L_n}f(\la_j)$, we have\begin{align}\label{tt}&\nonumber|\langle f, \rho_n\rangle-\langle f, \rho_n'\rangle|=L_n^{-1}\left|\sum\limits_{j=1}^{L_n}(f(\la_j)-f(\la_j'))\right|\\ &\leq \frac{\|f'\|_{L^\infty(\mathbb R)}}{L_n}\sum\limits_{j=1}^{L_n}|\la_j-\la_j'|\leq \|f'\|_{L^\infty(\mathbb R)}\sqrt{L_n^{-1}\sum\limits_{j=1}^{L_n}|\la_j-\la_j'|^2},
\end{align}where we used the fact that $f$ is Lipschitz. The Hoffman-Wielandt inequality \cite{AGZ} further yields \begin{equation}\label{hw}\sqrt{L_n^{-1}\sum\limits_{j=1}^{L_n}|\la_j-\la_j'|^2}\leq L_n^{-1/2}\|H-H'\|_{H.S.},
\end{equation}thus the map $H\mapsto\langle f,\rho_n\rangle$ is $L_n^{-1/2}\|f'\|_{L^\infty(\mathbb R)}$-Lipschitz. Therefore,
if we combine \eqref{dds}\eqref{tt}\eqref{hw},  when $f$ is 1-Lipschitz, we  have \begin{align*}&|\langle f, \rho_n\rangle-\langle f, \rho_n'\rangle|\leq L_n^{-1/2}\|H-H'\|_{H.S.}\\&=L_n^{-1/2}\sqrt{L_n{n\choose q_n}^{-1}}\|x-x'\|={n\choose q_n}^{-1/2}\|x-x'\|,
\end{align*} i.e.,  the map $x \mapsto H \mapsto \langle f,\rho_n\rangle $ is ${n\choose q_n}^{-1/2} \|f'\|_{L^\infty(\mathbb R)}$-Lipschitz, which finishes (a).
By triangle inequality and   definition of the bounded Lipschitz metric \eqref{md}, after taking supremum over all 1-Lipschitz functions, we further have \begin{align*}|d_{BL}(\rho_n,\rho)-d_{BL}(\rho_n',\rho)|\leq d_{BL}(\rho_n,\rho_n')\leq{n\choose q_n}^{-1/2}\|x-x'\|,
\end{align*}this completes the proof of (b).\end{proof}





\subsection{Proof of Theorem \ref{cmt}}
Now we are ready to prove Theorem \ref{cmt}.
\begin{proof} 
By part (b) of Lemma \ref{lemlip}, $d_{BL}(\rho_n,\rho_{\infty})$ is $L $-Lipschitz with Lipschitz constant $L={n\choose q_n}^{-1/2} $. Therefore, by the concentration of measure theorem for Gaussian vectors \eqref{P}, for any $t>0$, we will have\begin{equation}\label{ddss}\mathbb{P}(|d_{BL}(\rho_n,\rho_{\infty})-\e d_{BL}(\rho_n,\rho_{\infty})|>t)\leq Ce^{-ct^2/L^2}=Ce^{-c{n\choose q_n}t^2}.
\end{equation}
For the empirical measure of eigenvalues $\rho_n$ defined in \eqref{emp} and its limit $\rho_{\infty} $ as proved   in \cite{FTD1}, for any 1-Lipschitz function $f$,  we have \begin{align*}|\langle f, \rho_n\rangle-\langle f, \rho_{\infty}\rangle|\leq |\langle f, \rho_n\rangle|+|\langle f, \rho_{\infty}\rangle|\leq 2\|f\|_{L^{\infty}}\leq 2.
\end{align*}Thus  $0\leq d_{BL}(\rho_n,\rho_{\infty})\leq 2. $ The fact that $\rho_n\to\rho_\infty$ almost surely (which is one of the main results in \cite{FTD1}) implies $$\lim\limits_{n\to+\infty}d_{BL}(\rho_n,\rho_{\infty})=0,\ \ \text{a.s.}$$ Therefore, by the dominated convergence theorem, 
we  have 
 $$\lim\limits_{n\to+\infty}\e d_{BL}(\rho_n,\rho_{\infty})=0.$$ 
Thus for every $a>0,$  there exists $N_0=N_0(a)>0$ such that $\e d_{BL}(\rho_n,\rho_{\infty})\leq a/2 $ for $n>N_0$. Thus if $n>N_0$,  by \eqref{ddss}, we have \begin{align*} \mathbb{P}(d_{BL}(\rho_n,\rho_{\infty})>a)\leq  \mathbb{P}(|d_{BL}(\rho_n,\rho_{\infty})-\e d_{BL}(\rho_n,\rho_{\infty})|>a/2)\leq  Ce^{-c{n\choose q_n}(a/2)^2}.
\end{align*}If $n\leq N_0$, we have \begin{align*} \mathbb{P}(d_{BL}(\rho_n,\rho_{\infty})>a)\leq 1=ee^{-1} \leq  Ce^{-{n\choose q_n}2^{-n}} \leq  Ce^{-{n\choose q_n}2^{-N_0}}.
\end{align*}Therefore, we always have $$ \mathbb{P}(d_{BL}(\rho_n,\rho_{\infty})>a)\leq Ce^{-c(a){n\choose q_n}} $$ for $c(a)=\min(c\cdot(a/2)^2,2^{-N_0(a)}),$ which completes the proof Theorem \ref{cmt}.\end{proof}

\end{document}